% Chaos Submission
% AIP document class
\documentclass[%
 aip,
% jmp,
% bmf,
% sd,
% rsi,
 amsmath,amssymb,
preprint,
reprint, onecolumn %%%%% Comment onecolumn to override one column format %%%%%%%
% author-year,%
%author-numerical,%
% Conference Proceedings
]{revtex4-1}
\usepackage[english]{babel}

\usepackage{graphicx}

\usepackage{amsmath}
\usepackage{amsthm}
\usepackage{amssymb}
\usepackage{fancyhdr}
\usepackage{lastpage}
\usepackage{multirow}
\usepackage{color}
\usepackage{ifthen}
\usepackage{enumitem}
\usepackage{ stmaryrd }
\usepackage{ bbold }
\usepackage{subcaption}

\usepackage[dvipsnames]{xcolor}

\newcommand{\R}{\mathbb{R}}
\newcommand{\N}{\mathbb{N}}

\newtheorem{theorem}{Theorem}

\newtheorem{prop}[theorem]{Proposition}
\newtheorem{lemma}[theorem]{Lemma}
\newtheorem{cor}[theorem]{Corollary}

\begin{document}

% Use the \preprint command to place your local institutional report
% number in the upper righthand corner of the title page in preprint mode.
% Multiple \preprint commands are allowed.
% Use the 'preprintnumbers' class option to override journal defaults
% to display numbers if necessary
%\preprint{}

%Title of paper
\title{Cluster Synchronization via Graph Laplacian Eigenvectors}

% repeat the \author .. \affiliation  etc. as needed
% \email, \thanks, \homepage, \altaffiliation all apply to the current
% author. Explanatory text should go in the []'s, actual e-mail
% address or url should go in the {}'s for \email and \homepage.
% Please use the appropriate macro foreach each type of information

% \affiliation command applies to all authors since the last
% \affiliation command. The \affiliation command should follow the
% other information
% \affiliation can be followed by \email, \homepage, \thanks as well.

\author{Tobias Timofeyev}
 \email{tobias.timofeyev@uvm.edu}
\author{Alice Patania}%
 \altaffiliation[Also at ]{Vermont Complex Systems Institute, University of Vermont, Burlington, Vermont, USA.}%Lines break automatically or can be forced with \\
\affiliation{%
 Department of Mathematics and Statistics\\
 University of Vermont, Burlington, Vermont, USA.
}%
%\email[]{Your e-mail address}
%\homepage[]{Your web page}
%\thanks{}
%\altaffiliation{}
% \affiliation{}

%Collaboration name if desired (requires use of superscriptaddress
%option in \documentclass). \noaffiliation is required (may also be
%used with the \author command).
%\collaboration can be followed by \email, \homepage, \thanks as well.
%\collaboration{}
%\noaffiliation

\date{\today}

\begin{abstract}
Almost equitable partitions (AEPs) have been linked to cluster synchronization in oscillatory systems, highlighting the importance of structure in collective network dynamics. 
We provide a general spectral framework that formalizes this connection, showing how eigenvectors associated with AEPs span a subspace of the Laplacian spectrum that governs partition-induced synchronization behavior. This offers a principled reduction of network dynamics, allowing clustered states to be understood in terms of quotient graph projections. 
Our approach clarifies the conditions under which transient hierarchical clustering and multi-frequency synchronization emerge, and connects these dynamical phenomena directly to network symmetry and community structure. 
In doing so, we bridge a critical gap between static topology and dynamic behavior-namely, the lack of a spectral method for analyzing synchronization in networks that exhibit exact or approximate structural regularity.
Perfect AEPs are rare in real-world networks since most have some degree of irregularity or noise. We define a relaxation of an AEP we call a quasi-equitable partition at level $\delta$ ($\delta-$QEP). $\delta-$QEPs can preserve many of the clustering-relevant properties of AEPs while tolerating structural imperfections and noise. This extension enables us to describe synchronization behavior in more realistic scenarios, where ideal symmetries are rarely present.
Our findings have important implications for understanding synchronization patterns in real-world networks, from neural circuits to power grids.
\end{abstract}

%\keywords{}

%\maketitle must follow title, authors, abstract, and keywords
\maketitle

% body of paper here - Use proper section commands
% References should be done using the \cite, \ref, and \label commands

The Kuramoto model is a network oscillator model useful for describing collective synchronization over a network of interacting entities, with applications from neuroscience to engineering to the social sciences. In this work we explore the relationship between the structure of the underlying network and the clustered synchronization that can occur in the dynamics. We look at the underlying network structure through the lens of Almost Equitable Partitions (AEPs), which are defined by regularity in their connectivity and relate to common notions of network community structure.
The existence of this kind of partition is reflected in the eigenvectors of the network’s matrix representation as a graph Laplacian. We use this to relate the theory of community detection and symmetries of networks to the synchronization dynamics of the Kuramoto model by building a description in terms of these eigenvectors. When treated in this way, the question of collective synchronization becomes one of the population dynamics of eigenvectors. This relationship allows us to shed light on transient cluster synchronization, multi frequency clustering, and the conditions under which they can occur.

\section{Introduction\label{intro}}
The study of complex networks and their dynamical behaviors has been a fertile ground for understanding synchronization phenomena across different fields. From meso-scale neural activation, to power flow networks, numerous complex networks exhibit interacting oscillatory behavior. In the study of these systems, it is often natural to ask what sort of synchronization behaviors can occur. In recent years, there has been evidence from neuroscience of an unexplored relationship between resting state clusters in the brain and the Laplacian eigenvectors of functional connectivity networks.\cite{atasoy_human_2016} While existing studies increasingly recognize the role of non-pairwise interactions in shaping complex neuronal synchrony,\cite{majhi_patterns_2025} they fall short of explaining the correspondence between functional clustering and structurally defined eigenvectors, particularly in the presence of high amounts of noise. 
Our work approaches this question from the perspective of network oscillatory models, closing this gap between structural and functional clusters. 

Pioneering work by Kuramoto\cite{Kuramoto_chem_oscillations} on what is now called the Kuramoto model, brought fundamental insights into how coupled oscillators spontaneously synchronize, and laid the groundwork for subsequent investigations into network-level synchronization dynamics.\cite{strogatz_kuramoto_2000} The Kuramoto model and its derivatives are used as an abstraction of coupled periodic behavior for study. This model represents vertices as oscillators and couples their differential equations over the edges of the network. Consider a weighted graph $G$ with adjacency matrix $A$. Let $\sigma\in \R_{\geq 0}$ be a global coupling constant, and for every vertex $i$, define the natural frequency as $\omega_i\in \R$. Then the phase of vertex $i$ over time is determined by the differential equation
\begin{equation}\label{Kuramoto}
   \dot{\theta_i} = \omega_i - \sigma \sum_{j} A_{ij} \sin(\theta_i-\theta_j).
\end{equation} 

The synchronization dynamics of coupled oscillator models can take many forms. In this paper we explore full synchronization and clustered dynamics in the context of the Kuramoto model, but in more general coupled oscillator models, there exist a broad range of hybrid synchronization patterns, as reviewed by Parastesh et al.,\cite{parastesh2021chimeras} studying chimera states emerging from structurally heterogeneous or imperfectly symmetric networks. In parallel, studies have also demonstrated that cluster synchronization can arise not only from network symmetries but also from nonlinear coupling effects. For instance, Skardal et al.\cite{skardal_cluster_2011} and Gong and Pikovsky\cite{gong_low-dimensional_2019} have shown that higher-order harmonic terms in the coupling function can induce multi-cluster dynamics even in fully connected networks without structural heterogeneity. 

Naturally, the synchronization characteristics of a coupled oscillator model, and the Kuramoto model specifically, are informed by the structure of the graph it is placed upon. However, the nature of this relationship is unclear, particularly due to the multitudinous structures a graph can take. 
Recent advances in the study of coupled oscillator dynamics have progressively refined our understanding of how graph structure influences synchronization potential\cite{nguyen2025communities, schaub_graph_2016, kato_cluster_2023} and have proven that Almost Equitable Partitions (AEPs) facilitate cluster synchronization via their quotient graph. Specifically, the characterization of synchronized clusters in networked oscillators has highlighted the intricate relationship between graph structure and dynamical coherence.\cite{abiad_characterizing_2022, OClery_Observability_2013}
Simultaneously, in more general signal processing literature, a novel approach has been emerging, which focuses on the discretization of the Fourier analysis framework to better study the fundamental tradeoff between a signal's distribution across the vertices of a graph and in its spectral domain.\cite{shuman_emerging_2013}

Despite these significant advances, a crucial gap persists between the spectral properties of AEPs, studied in graph theory and network science, and the cluster synchronization they allow in complex networks. Traditionally, their spectral properties have proven to be crucial for the detectability of AEPs (community structure) on graphs,\cite{zhang_almost_2021, krzakala_spectral_2013} but have till this point found little purchase in the space of  signals constrained to the vertices of graphs and importantly for us, the dynamics of cluster synchronization.
Foundational studies by Schaub et al.,\cite{schaub_graph_2016} O'Clery et al.\cite{OClery_Observability_2013} and Kato et al.\cite{kato_cluster_2023} have pointed to the importance of network structure and specifically Almost Equitable Partitions in the formation of clustered dynamics.
However, these dynamical insights on the structurally equitable or near-equitable configurations have yet to be tied into the corresponding spectral theory of these structures for analysis.

Building upon spectral graph theory and drawing from recent developments in network synchronization research, we propose a generalized framework that extends traditional equitable partition analysis. Researchers like Schaub et al., and Thibeault et al.\cite{schaub_hierarchical_2023, thibeault_threefold_2020,thibeault_low-rank_2024} have provided foundational insights into matrix reduction and system dimensionality that inform our approach.
At the same time, symmetry-based approaches\cite{cho_stable_2017, zhang_symmetry-independent_2020, pecora_cluster_2014, khanra_identifying_2022} have highlighted how both explicit and hidden symmetries in network topology can dictate the formation and stability of synchronized clusters. 

However, a unifying spectral framework that links approximate structural patterns to emergent dynamical behavior has been missing. Our work addresses this gap by showing that eigenvectors associated with quotient graphs induced by almost equitable partitions (AEPs) form a spectral subspace that governs the coarse-grained dynamics of clustered synchronization. By leveraging eigenvector interpretation techniques,\cite{shuman_emerging_2013} we extend this framework to define quasi-equitable partitions at level $\delta$ ($\delta$-QEPs), which may retain dynamical relevance even when exact AEP structure is disrupted by noise or irregularity.

The spectral approach introduced here provides new robust insights into the relationship between quasi-equitable partitioning and synchronized behavior, while also providing an easily generalizable method for analyzing synchronization potential in coupled oscillatory networks. We showcase the power of the proposed framework with two examples: a spectral analysis of multi-frequency clustering, and extensions of the Kuramoto model that include weighted edges and antisymmetric phase lag.

In the sections that follow, we build the necessary foundation for an understanding of cluster synchronization in the Kuramoto model, entirely through the information encoded in its graph Laplacian. We begin by considering the structural information within the Laplacian's spectrum, and particularly the role of partitions therein (Section 2). We then discuss the relationship between the spectrum of the Laplacian and its dynamics. Following this background, we present a novel spectrally motivated description of cluster synchronization, and discuss the stability of these properties with weighted graphs (Section 3). The presented spectral description of clustered dynamics allows for new and interesting applications. We explore the implications for multi-frequency clustering, and modeling with antisymmetric phase lag. Lastly, Section 5 offers concluding remarks and future research directions.

\section{The Spectral description of Synchronized Clusters}
\label{sec:spectral_description_of_synchronized_clusters}

% Put \label in argument of \section for cross-referencing
%\section{\label{}}

The field of spectral graph theory involves itself with the linear algebra of graph representations and the relationship with graph structure. This crucial foundation for the field of network analysis allows us to leverage matrix algebra tools, built for computing, to understand graphs at scales we cannot feasibly study combinatorially.

Matrix representations of graphs come in many forms, and we present a few with relevance to our application. Let $G=(V,E,w)$ be a weighted connected graph with $n$ vertices $v_1, \dots, v_n\in V$ and $m=|E|$ edges $E \subset \{ \{u,v\} \mid u,v \in V \}$, where $w:E\to \R^+$ is the weight function for the edges. Perhaps the most natural description of this graph comes in the form of the adjacency matrix $A\in \R^{n\times n}$, whose $ij$th entry is $w(\{v_i,v_j\})$ if $\{v_i,v_j\}\in E$ and $0$ otherwise.

Another important representation comes in the form of the signed incidence matrix, also known as the boundary matrix. This matrix ignores weight assignments, and requires an arbitrary orientation on the edges of our graph. 
For simplicity, we induce an orientation from the vertex order. Define the signed incidence matrix as $B \in \R^{n\times m}$ such that $b_{ie}=1$ if edge $e=\{i,j\}$ and $i<j$, and $b_{ie}=-1$ if $i>j$. Finally, $b_{ie}=0$ if edge $e$ is not incident to vertex $i$. 
This boundary matrix serves as a fundamental building block of the spectral form of the Kuramoto equation shown later. We will see that when viewed as an operator, it offers a bridge between vertex values and edge values.

Finally we introduce the matrix we pay particular attention to in this paper, the graph Laplacian. We define the graph Laplacian as $L=D-A$, where $A$ is the adjacency matrix and $D\in \R^{n\times n}$ is the diagonal matrix of the row sums of $A$. The entries of $D$ are the sums of incident edge weights. In the case of uniform weight $1$ (an 'unweighted' graph), they equal the degrees of the corresponding vertices. The graph Laplacian is a cornerstone in spectral graph theory and has numerous applications in network analysis, clustering, and graph-based dynamical systems.\cite{Newman2018-ut}

One interesting feature of the graph Laplacian is in its relationship to the incidence matrix. Given our graph $G$, one can show that the Laplacian $L=BWB^T$ where $W\in \R^{m\times m}$ is the diagonal matrix of edge weights. 

The description of the vertex-wise Laplacian in terms of the boundary matrix gives rise to a dual matrix known as the weighted down edge Laplacian. This matrix is defined as $L^{DN} \equiv B^TBW$, and has eigenvectors in one to one correlation with that of the graph Laplacian. In particular, an eigenvector-value pair of the vertex-valued graph Laplacian, $(v_r,\lambda_r)$ has corresponding pair $(B^Tv_r,\lambda_r)$  of the edge-valued $L^{DN}$. Notice that the boundary matrix facilitates this duality between the vertex and edge spaces. This relationship is shown explicitly in appendix \ref{apdx:Coeff_Lapl}.

It is well-established that the eigenvectors of the Laplacian play a crucial role in revealing organizational structure in graphs, particularly in terms of their symmetry and connectivity patterns. For example, the multiplicity of the zeroth eigenvalue reveals the number of connected components and, in the case of multiplicity $1$, its corresponding normalized eigenvector is $v_0 = \frac{1}{\sqrt{N}}\mathbb{1}$. 

\subsection{Spectrum and Almost Equitable Partitions}\label{sec:graphMat}

One frame under which to understand symmetries of graphs or groupings of structurally similar vertices, is \textbf{almost equitable partitions} (AEP) of a graph, also known as externally equitable partitions. 

We define an AEP on an unweighted graph before proceeding to our weighted case. Let $N(v)$ denote the set of neighbors of $v\in V$. An AEP is a partition of the vertex set $\pi=\{V_1,\dots,V_k\}$ such that for $V_i, V_j\in \pi$ where $i\neq j$, there exists $d_{ij}\in \N$ such that for all $v\in V_i$, we have that $|N(v)\cap V_j|=d_{ij}$ (See figure \ref{fig:EXPOSITION} B.1). In other words, the partition cell a vertex is in determines the number of neighbors that vertex has in each other partition cell. Such a partition encodes a grouping of vertices that play the same `role' in the connectivity of the graph between the delineated partition cells. Note that the almost equitable partition ignores any connectivity within a partition cell, the case in which this too is regular is called an equitable partition (EP). The lack of constraints within partitions cells of an AEP allow for them to exist at multiple, overlapping, scales. 

One can generalize this notion and define an AEP in the context of a weighted graph.\cite{abiad_characterizing_2022} Let $G=(V,E,w)$ be a weighted graph with weight function $w:E\to\R^+$. Then a weighted almost equitable partition of $G$ is a vertex partition $\pi=\{V_1,\dots,V_k\}$ such that for $V_i, V_j\in \pi$ with $i\neq j$, there exists $d_{ij}\in \R^+$ such that for all $v\in V_i$ we have $\sum_{w\in N(v) \cap V_j} w(\{v,w\}) =d_{ij}$. In other words, a weighted AEP is again a vertex partition with regular inter cell connectivity, except now the regularity comes in the form of a constant sum of edge weights rather than a constant number of neighbors. Note that we recover the original definition of an AEP with a weight $1$ uniform graph. For the rest of this paper our discussions on AEPs refer to the weighted AEP unless otherwise specified.

We can use this notion of vertex partitioning, or vertex role assignment, to coarse-grain the graph, summarizing the implied structure in what is known as the quotient graph (See figure \ref{fig:EXPOSITION} C.1).
Consider a graph $G$ with arbitrary partitioning $\pi$. Define the indicator matrix of $\pi$ as $P^\pi\in\R^{n\times k}$ such that $P^\pi_{il}=1$ if vertex $i$ is in partition cell $V_l$, and $0$ otherwise. Note we omit the $\pi$ superscript when the choice of partition is clear. The rows of $P$ correspond to the vertices of $G$ and its columns correspond to the partition cells of $\pi$. Now given an $n$ by $n$ matrix representation of a graph like the graph Laplacian (or adjacency), $L$, we define the quotient Laplacian (or adjacency) matrix of $L$ with respect to the partition $\pi$ as 
\[
L^\pi= (P^TP)^{-1}P^TLP.
\]
This definition uses $\pi$ to partition $L$ into blocks, whose average row sums become the entries of $L^\pi\in \R^{k\times k}$. This means the entry $(L^\pi)_{pq}$ for $p\neq q$ is the negative average edge weight sum from vertices of $V_p$ into $V_q$. The $p$th diagonal entry of $L^\pi$ then corresponds to the sum of the average out weight sums from $V_p$ to all other partition cells, ensuring that the rows of $L^\pi$ sum to zero. As a graph is entirely described by its graph Laplacian, this $L^\pi$ induces it's own graph which we call the quotient graph of $G$.

\begin{figure}
    \centering
    \includegraphics[width=\linewidth]{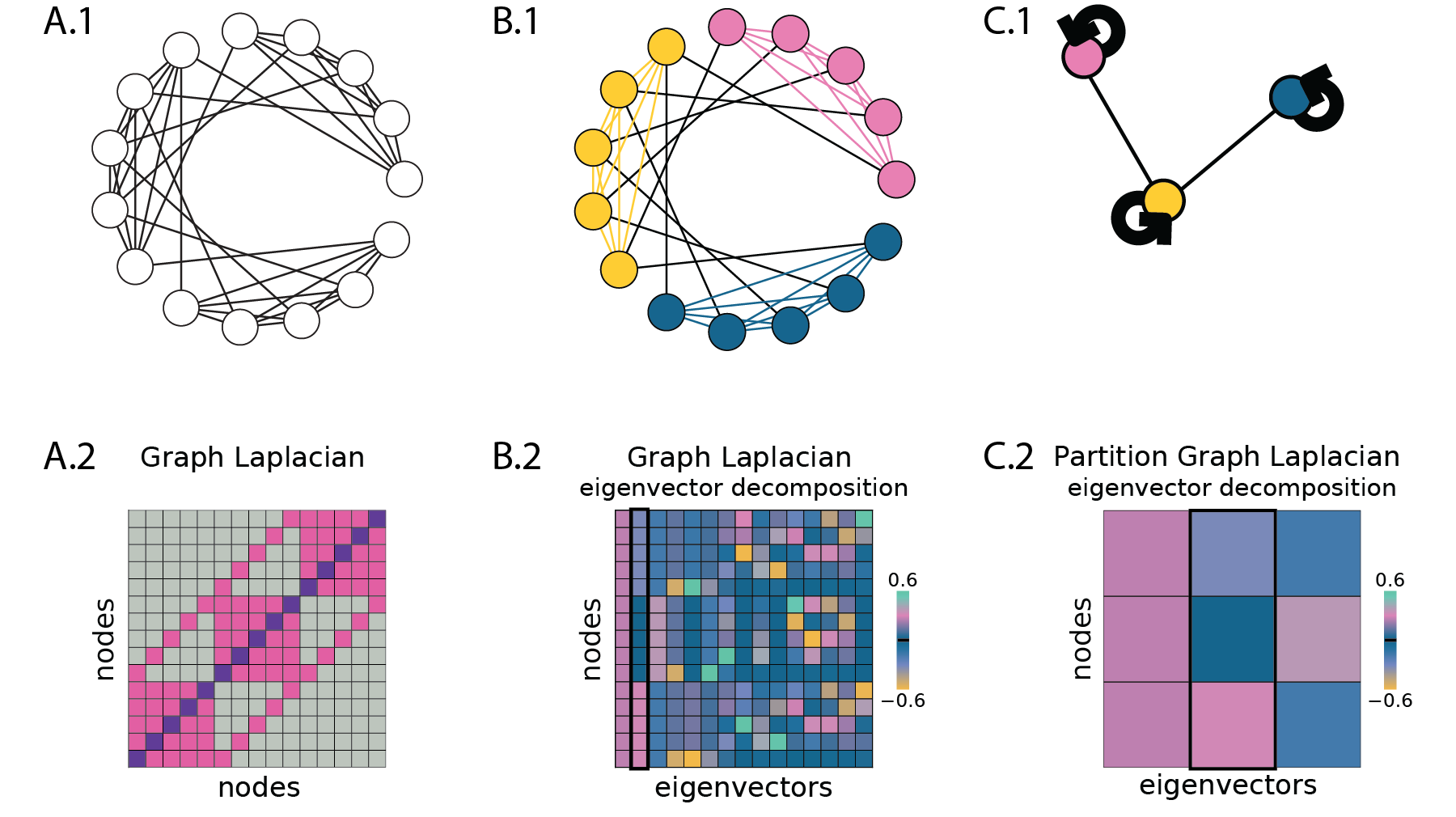}
    \caption{
    In \textbf{A.1}, we show the original graph, and in \textbf{A.2}, its associated graph Laplacian matrix. The graph admits an almost equitable partition (AEP) into three cells, 
    visually suggested in \textbf{B.1} by vertex coloring.
    In \textbf{C.1}, we display the quotient graph induced by the AEP,
    and in \textbf{C.2}, the eigenvectors of its quotient Laplacian.
    The structure of this quotient graph is reflected in eigenspectrum of the graph Laplacian \textbf{B.2}, with the highlighted eigenvector revealing the partition through its block structure. In particular, this structural eigenvector is the scaled counterpart of the highlighted eigenvector in \textbf{C.2}. This highlights the alignment between the structural features of the full graph and its coarse representation.
    }
    \label{fig:EXPOSITION}
\end{figure}

The quotient graph provides a powerful tool for coarsely analyzing the structure of the original graph, while preserving connectivity information relevant to the partition. Indeed, cluster synchronization can be understood as a reduction of the dynamics onto such a quotient graph.\cite{schaub_graph_2016} The preservation of information is made clear in the analysis of the quotient Laplacian in the context of an almost equitable partition. Notice, in the case of an almost equitable partition $\pi$, the off diagonal blocks of the induced block partition of $L$ have average row sums exactly equal to one another. This implies that the off diagonal entries of $L^\pi$, which correspond to the average of these row sums, preserve all the information of inter cell connectivity. The following theorem states this algebraically.

\begin{theorem}\label{thm:quotientAEP}\cite{brouwer2012spectra, cardoso_Laplacian_2007, aguilar_almost_2016} Let $L$ be the Laplacian of a graph $G=(V,E,w)$, and $\pi = \{C_1,\cdots,C_k\}$ be a partition of $V$.\\ The partition $\pi$ is an almost equitable partition \textbf{if and only if} the image of $P^\pi$ is invariant under $L$, which means there exists a matrix M such that $L P = P M$. Moreover, the solution to this Sylvester matrix equation is unique, namely $M = L^{\pi} = (P^TP)^{-1}P^TLP$.
\end{theorem}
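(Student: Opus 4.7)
The plan is to reduce both directions of the equivalence to a single pointwise computation: evaluate an entry of $LP$ and see directly what condition is needed for its columns to lie back in $\text{Im}(P)$.

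First I would characterize the image: because $P=P^\pi$ has columns $P_1,\dots,P_k$ equal to the indicator vectors of the cells $C_1,\dots,C_k$, the subspace $\text{Im}(P)$ consists exactly of the vectors in $\R^n$ that are constant on each cell. So $\text{Im}(P)$ being $L$-invariant is equivalent to the statement that for every $l$, the vector $LP_l$ is constant on every cell $C_p$.

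Next I would compute $(LP_l)_v$ explicitly using $L=D-A$. For $v\in C_p$ with $p\neq l$, only the off-diagonal part contributes and one gets
\[
(LP_l)_v \;=\; -\sum_{j\in C_l} w((v,j)) \;=\; -d_{pl}^{(v)},
\]
the negative of the total edge-weight from $v$ into $C_l$. For $v\in C_l$ the diagonal $L_{vv}$ equals the full weighted degree, so after cancellation
\[
(LP_l)_v \;=\; \sum_{q\neq l} d_{lq}^{(v)}.
\]
Requiring each such quantity to be independent of the choice of $v$ within a cell is exactly the weighted AEP condition from the definition in Section 2 (the diagonal case being automatic once the off-diagonal cases hold). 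This gives both directions of the biconditional at once.

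For the second assertion, once $\text{Im}(P)$ is $L$-invariant there exists a unique $M\in\R^{k\times k}$ with $LP=PM$, since the columns of $P$ are linearly independent. To identify $M$, I would left-multiply by $P^T$, use that $P^TP$ is the diagonal matrix of cell sizes (hence invertible because no cell is empty), and solve $M=(P^TP)^{-1}P^TLP=L^\pi$. I expect the main (only) obstacle to be purely bookkeeping: keeping the diagonal-block case of the computation clean, since that is where the AEP condition enters only indirectly through the sum over $q\neq l$. Everything else is linear algebra that falls out of writing $P$ in terms of its indicator columns.
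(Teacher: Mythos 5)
Your argument is correct and complete. Note that the paper does not actually prove Theorem~\ref{thm:quotientAEP}; it cites it to the literature and only sketches the eigenvector consequence used in Corollary~\ref{cor:quotientevec}, so there is no in-paper proof to match against. Your route is the standard one from those references: identify $\mathrm{Im}(P)$ with the vectors constant on cells, compute $(LP_l)_v$ entrywise via $L=D-A$ to see that off-diagonal constancy is literally the weighted AEP condition (with the diagonal case following by summation, as you observe), and then pin down the intertwining matrix as $L^\pi$ by left-multiplying by $P^T$ and inverting the diagonal matrix of cell sizes. The one point worth stating explicitly when you write it up is the direction ``invariance $\Rightarrow$ AEP'': invariance gives constancy of $LP_l$ on \emph{every} cell including $C_l$ itself, which is formally stronger than the AEP condition, but since the diagonal value is determined by the off-diagonal ones the two conditions coincide --- you flag this parenthetically and it is the only place the equivalence could silently break.
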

Theorem \ref{thm:quotientAEP} comes from a more general theory of the tight interlacing the eigenvalues of $L^\pi$ have with respect to the eigenvalues of $L$.\cite{brouwer2012spectra} 
 It is helpful to note that the expression for the quotient Laplacian $L^{\pi}$ can be written using the Moore-Penrose pseudoinverse as $L^{\pi} = P^{+} L P$. This follows from Penrose's 1955 classical paper\cite{penrose1955generalized} and offers an alternative derivation of Theorem 1. 
Then given, $(\lambda,v )$, an eigen-pair of $L^\pi$, it follows
\[
L^\pi v=\lambda v \implies PL^\pi v = \lambda P v \implies LPv = \lambda Pv
\]
The reverse implication also holds because the columns of $P$ are independent. This leads to the following corollary.

\begin{cor}\label{cor:quotientevec}\cite{brouwer2012spectra,aguilar_almost_2016} 
    Let $\pi$ be an almost equitable partition of a graph, and $P$ the indicator matrix associated to $\pi$. Then $(v,\lambda)$ is an eigenvector-value pair of $L^\pi$ if and only if $(Pv,\lambda)$ is an eigenvector-value pair of $L$.
\end{cor}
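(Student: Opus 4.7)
The plan is to derive the corollary directly from the identity $LP = PL^{\pi}$ given in Theorem \ref{thm:quotientAEP}, together with the observation that the indicator matrix $P$ of a partition has full column rank (its columns are supported on disjoint nonempty cells, hence linearly independent).

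First I would handle the forward implication. Assume $(v, \lambda)$ is an eigenvector--value pair of $L^{\pi}$, so $L^{\pi} v = \lambda v$ and $v \neq 0$. Multiplying on the left by $P$ and invoking $LP = PL^{\pi}$ yields
\[
L(Pv) \;=\; P L^{\pi} v \;=\; \lambda P v.
\]
Since $P$ is injective (full column rank), $v \neq 0$ implies $Pv \neq 0$, so $(Pv, \lambda)$ is a genuine eigenvector--value pair of $L$.

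For the reverse direction, suppose $(Pv, \lambda)$ is an eigenvector--value pair of $L$, so $L(Pv) = \lambda (Pv)$ with $Pv \neq 0$. Using $LP = PL^{\pi}$ again, this reads $P L^{\pi} v = \lambda P v = P(\lambda v)$. Applying the left inverse $(P^T P)^{-1} P^T$ (which exists because $P^T P$ is a diagonal matrix of cell sizes, hence invertible) cancels $P$ on both sides and gives $L^{\pi} v = \lambda v$. Injectivity of $P$ again ensures $v \neq 0$, so $(v,\lambda)$ is an eigen-pair of $L^{\pi}$.

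There is no real obstacle here, since all the work is encapsulated in Theorem \ref{thm:quotientAEP}; the only subtlety worth flagging explicitly is the injectivity of $P$, which is what lets one translate freely between eigen-pairs in $\mathbb{R}^k$ and in $\mathbb{R}^n$ without losing the nonzero condition. I would therefore present the argument as two short display equations bracketing a one-line justification that $P$ has full column rank.
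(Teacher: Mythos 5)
Your proposal is correct and follows essentially the same route as the paper, which likewise derives both directions from the identity $LP = PL^{\pi}$ of Theorem \ref{thm:quotientAEP} and invokes the linear independence of the columns of $P$ for the reverse implication. Your explicit attention to the nonzero condition and the left inverse $(P^TP)^{-1}P^T$ is a slightly more careful writeup of the same argument, not a different approach.
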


This corollary tells us how the spectrum of the quotient graph finds its way into that of the full graph. This relationship is exemplified in figure \ref{fig:EXPOSITION}. The same is true for all AEPs and their corresponding quotient graphs, admitting any structure within partition cells and not just the fully connected one shown in this toy example. We refer to these eigenvectors from the quotient graph as the structural eigenvectors. We borrow this naming convention from network science, the relation to which is discussed in section \ref{discuss}.

While the above theorems and corollaries are defined on graph laplacians, they have counterparts with adjacency matrices. The associated partitions are equitable partitions (EPs) rather than AEPs however. Generally, AEPs are associated to graph laplacians whereas EPs are associated to adjacency matrices.

\subsubsection{Eigenvectors as a Basis for Dynamics}\label{sec:Evecbais}

From PDEs and solutions to the Helmholtz equation, we find intuition for understanding the dynamics on a graph through the eigenvectors of the graph Laplacian. In the solution to the heat equation or wave equation on a surface, the corresponding eigenfunctions represent 'spatial frequencies' that are composed in a Fourier series to describe solution functions on the surface.\cite{haberman2013applied} The graph Laplacian can be thought of as a discretized Laplacian operator, abstracted to a graph. This spatial structure is both discrete and in general non-uniform. 

The Laplacian matrix emulates diffusion across the graph, analogously to the Laplace operator.\cite{belkin2008towards} The matrix has (together with its deformed variants) a well established connection with the diffusive process of random walks on graphs.\cite{Seabrooketal_Markov_Spectral,boukrab_random-walk_2021} 
Further, its eigenvectors and eigenvalues encode 'spatial frequency', playing a similar role to the eigenfunctions of the Helmholtz equation in the case of diffusion on a surface, but discretized over a non-uniform graph structure. One way in which we can see this reflected is by measuring zero crossings along random walks over the graph.\cite{shuman_emerging_2013} The rows and columns of the graph Laplacian are associated to the vertices of the graph. In analogy to functions, the eigenvectors of this matrix are of length $n$ and assign a value to each vertex. In considering random walks on our graph, one can count the frequency of sign changes along those walks and see positive correlation with the eigenvalue of that eigenvector.\cite{shuman_emerging_2013} This yields an intuition of 'spatial frequency' and how it is abstracted to graphs. 
A relevant example is the Fiedler eigenvector, associated to the first non-zero eigenvalue of the graph Laplacian. The vertex assignement here can be thought of as a gradient from negative to positive values across the graph, which naturally presents a low 'spatial frequency'.
This idea opens the door for us to construct a graph-equivalent of the Fourier series solution method for PDEs, describing solutions in terms of 'spatial frequencies'.\cite{shuman_emerging_2013}

Due to the symmetry of the graph Laplacian, its eigenvectors are real, orthogonal, and define a basis of $\R^n$. Thus the eigenvectors can be used as a basis to describe behavior on the vertices of a graph. We may then ask if structural eigenvectors can form a sub-basis, since they come from a Laplacian as well, allowing for a restricted description of behavior on the associated partition. In general a quotient Laplacian will not be symmetric, preventing us from using the properties of symmetric matrices as before. While the structural eigenvectors of $L$ are orthogonal, the corresponding eigenvectors from $L^\pi$ are not necessarily so. We can show that they form a basis however.
\begin{prop}\label{prop:quotientspan}
    Let $G$ be a graph, and $L$ its graph Laplacian. Let $\pi$ be an AEP of size $k$ on $G$, and $L^\pi$ be the quotient Laplacian of $L$ associated to $\pi$. The eigenvectors of $L^\pi$ span $\R^k$.
\end{prop}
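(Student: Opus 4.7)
The plan is to exploit Theorem \ref{thm:quotientAEP} together with symmetry of $L$ to lift the question about the non-symmetric $k \times k$ matrix $L^\pi$ back to a question about $L$ restricted to an invariant subspace, where spectral theorem applies.

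First I would observe that by Theorem \ref{thm:quotientAEP}, the column span $\mathrm{Im}(P) \subseteq \mathbb{R}^n$ is $L$-invariant. Since $L$ is symmetric (real Laplacian of a weighted undirected graph), the orthogonal complement $\mathrm{Im}(P)^\perp$ is also $L$-invariant, and therefore $L$ restricted to $\mathrm{Im}(P)$ is a self-adjoint operator on that subspace with the inherited Euclidean inner product. The spectral theorem then yields an orthonormal basis $u_1,\dots,u_k$ of $\mathrm{Im}(P)$ consisting of eigenvectors of $L$, where $k = \dim\mathrm{Im}(P)$ since $P$ has full column rank (its columns are indicators of disjoint nonempty cells).

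Next I would pull these back through $P$. Because each $u_i \in \mathrm{Im}(P)$, there exist unique $v_i \in \mathbb{R}^k$ with $u_i = P v_i$, and by Corollary \ref{cor:quotientevec} each $v_i$ is an eigenvector of $L^\pi$ with the same eigenvalue as $u_i$. Linear independence of $v_1,\dots,v_k$ follows immediately from linear independence of $Pv_1,\dots,Pv_k$ combined with the injectivity of $P$ on $\mathbb{R}^k$ (full column rank). Since we have $k$ linearly independent vectors in $\mathbb{R}^k$, they span $\mathbb{R}^k$, which is the claim.

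The main subtle point — really the only step that is not purely formal — is justifying that $L|_{\mathrm{Im}(P)}$ is diagonalizable. This is where I would be careful to note that $L^\pi$ itself is generally not symmetric, so one cannot invoke the spectral theorem directly on $L^\pi$; the trick is that symmetry of $L$ plus $L$-invariance of $\mathrm{Im}(P)$ forces $\mathrm{Im}(P)^\perp$ to also be $L$-invariant, which lets the restriction inherit self-adjointness. Everything else (full column rank of $P$, the correspondence between eigenvectors of $L^\pi$ and of $L|_{\mathrm{Im}(P)}$) is bookkeeping already set up in Theorem \ref{thm:quotientAEP} and Corollary \ref{cor:quotientevec}.
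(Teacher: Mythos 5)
Your proof is correct, but it takes a genuinely different route from the paper's. The paper works directly with the $k\times k$ matrix: it computes $(L^\pi)^TL^\pi$ and $L^\pi(L^\pi)^T$, concludes that $L^\pi$ is normal, and hence diagonalizable. You never touch $L^\pi$ itself; you lift the problem to $\R^n$, restrict the symmetric $L$ to the $L$-invariant subspace $\mathrm{Im}(P)$ supplied by Theorem~\ref{thm:quotientAEP}, extract an orthonormal eigenbasis $u_1,\dots,u_k$ of that subspace from the spectral theorem, and pull it back through the injective $P$ to $k$ independent eigenvectors of $L^\pi$ via Corollary~\ref{cor:quotientevec}. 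Your route is the more robust of the two: the paper's computation silently commutes $(P^TP)^{-1}$ past $P^TLP$, which is only automatic when the cells of $\pi$ have equal sizes (recall $P^TP$ is the diagonal matrix of cell sizes), and $L^\pi$ genuinely need not be normal when they differ --- e.g.\ for the path on three vertices with the center vertex as a singleton cell one gets $L^\pi=\bigl(\begin{smallmatrix}2&-2\\-1&1\end{smallmatrix}\bigr)$, which is diagonalizable but not normal. The one-line repair of the paper's argument is to note that $L^\pi$ is similar to the symmetric matrix $(P^TP)^{-1/2}P^TLP(P^TP)^{-1/2}$; your lift-and-pull-back sidesteps the issue entirely and, as a bonus, yields that the eigenvalues of $L^\pi$ are real and that the scaled-up eigenvectors $Pv_i$ may be chosen orthogonal in $\R^n$, which is exactly what the paper wants later for the structural sub-basis. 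One small tidy-up: linear independence of the $v_i$ already follows from applying $P$ to a vanishing linear combination and invoking independence of the $u_i$; injectivity of $P$ is needed only for the uniqueness of the $v_i$, not for that step.
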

\begin{proof}
    Let $L^\pi$ be the quotient Laplacian as described above. It follows
    \begin{align*}
        (L^\pi)^T L^\pi &= (P^TP)^{-2}\left(P^T L^T P\right)\left(P^T L P\right)\\
        &= (P^TP)^{-2}\left(P^T L P\right)\left(P^T L^T P\right)    &&\text{since $L$ is symmetric}\\
        &=L^\pi(L^\pi)^T .
    \end{align*}
    Thus $L^\pi$ is a normal matrix. It follows that it is diagonalizable and admits a full dimensional eigenspace.
\end{proof}

Furthermore, the structural eigenvectors in the spectrum of a graph Laplacian, as shown in Corollary \ref{cor:quotientevec}, are within the column space of $P$, and therefore assign the same values to vertices within each partition cell. Thus the structural eigenvectors define a sub-basis in the spectrum of the graph Laplacian that captures all information relevant to the clustered dynamics, allowing the system to be effectively described by the quotient graph. We can formalize this conclusion in the following statement.

\begin{cor}\label{Cor:SubBasis}
    Let $G$ be a graph, $L$ its graph Laplacian, and $\pi$ an AEP on its vertex set. The set $\{P^\pi v | v \text{ eigenvector of $L^\pi$}\}$ forms a sub-basis of the eigenvector basis of $L$ which spans the space of vectors that are constant on the cells of $\pi$, and therefore encodes all dynamics constrained by the partition.
\end{cor}
\begin{proof}
    Given the graph $G$ and an AEP $\pi$ of size $k$ on its vertex set, corollary \ref{cor:quotientevec} guarantees that $B = \{P^\pi v | v \text{ eigenvector of $L^\pi$}\}$ is a subset of eigenvectors of the graph Laplacian. By proposition \ref{prop:quotientspan}, the set of eigenvectors of $L^\pi$ span $\R^k$. It follows that $B$ spans the column space of $P^\pi$. Recall that vectors in this column space are of size $n$ and have constant values for all entries corresponding to vertices within the same cell of $\pi$. Thus $B$ is a subspace of the eigenspace of $L$ which spans all vector valued vertex assignments constant within the cells of $\pi$ in $G$.
\end{proof}

In section \ref{K-Dynamics} we will see how the eigenvalues become relevant to the synchronization of dissipative models like the Kuramoto model. We draw from the notion of assortativity in community structure to understand the placement of structural eigenvectors in the spectrum of the Laplacian. An assortative partitioning of the vertex set has greater intra cell than inter cell connectivity. A disassortative partitioning has the converse. Moreover, the assortativity of the partitioning has been linked to the placement of the structural eigenvector in the spectrum of the Laplacian.\cite{krzakala_spectral_2013} In particular, for large enough $n$, structural eigenvectors associated to assortative partitions tend towards smaller associated eigenvalues, whereas the structural eigenvectors associated to disassortative partitions tend towards the larger eigenvalues. The theory for this comes from community detection,\cite{saade_spectral_nodate} but we can explain this phenomenon heuristically through the lens of AEP vertex assignments and random walks with structural eigenvectors. Given a highly assortative AEP, random walks are likely to remain inside the partition cell they start from, and since the vertex assignments of the corresponding eigenvector are constant within those cells, the random walk is likely to correspond to relatively few zero crossings in a corresponding structural eigenvector. This lower number of zero crossings for the associated structural eigenvectors suggests a lower eigenvalue. A similar argument follows for a disassortative AEP. 

\subsection{Kuramoto Dynamics}\label{K-Dynamics}

The Kuramoto model represents vertices as oscillators and  couples their differential equations over the edges of the network. 
We consider a Kuramoto model on a connected graph, as in equation \ref{Kuramoto}, and represent the oscillator phases in vector form as $\vec{\theta}(t)\in \R^n$.
Following section \ref{sec:Evecbais}, we can decompose $\vec{\theta}$ using the eigenvector basis of $G$, where $v_i$ is the $i$th normalized eigenvector of the graph Laplacian, with an ordering imposed by the accompanying eigenvalues:
\[
\vec{\theta}(t) = \sum_{i=1}^n \alpha_i(t) v_i
\]
Previous work has shown that with some persuasion, one can perform a change of variable of equation \ref{Kuramoto} to isolate the coefficients of this decomposition.\cite{kalloniatis_incoherence_2010} We extend this work to include weighted networks and antisymmetric phase offsets. In this section, we only consider the weighted extension, seen below, and we study the latter in section \ref{sec:phase lag},

\begin{align}\label{coeffeqn}
    \dot{\alpha}_r &= \omega^{(r)} - \sigma \sum_{a\in E(G)} W_{aa}   e_a^{(r)} \sin\left( \sum_{s>0} e_a^{(s)}\alpha_s(t)\right)
    && r\not=0\\
    \alpha_0 &= \sqrt{N} \overline{\omega} t.
\end{align}

Here $\omega^{(r)}=\vec{\omega}\cdot v_r$, is the product of the vector of natural frequencies with the $r$th eigenvector and $W_{aa}$ is the weight of edge $a$. Because our graph has a single connected component, the eigenvector corresponding to the zeroth eigenvalue is a vector of constant values. Thus isolating this normal mode is equivalent to quotienting out by the average frequency. For the zeroth coefficient, $\overline{\omega}$ is the average of the natural frequencies $\vec{\omega}$. The scaling of $\sqrt{N}$ comes from the normalization of the zero mode eigenvector. These equations then take this a step further by describing the rest of the model in the eigenbasis.

We call this basis change of the Kuramoto equations, the coefficient equations. This transformation frames the problem of synchronization in terms of the population dynamics of coefficients. When the system reaches a globally locked frequency, then naturally the zeroth mode describes the advancement of all phase angles, while all other coefficients describe the phase offsets of the system. Thus in this form of synchronized behavior,  the coefficients for non-zero eigenvalues approach constant values.\cite{kalloniatis_incoherence_2010} Assuming coefficients are small and close to synchronization, we can solve a linearization of equation \ref{coeffeqn}. For $r\neq 0$:
\begin{equation}\label{eqn:linearcoeff}
    \alpha_r(t) = \frac{\omega^{(r)}}{\sigma \lambda_r}(1-e^{-\sigma\lambda_r t}) + \alpha_r(0)e^{-\sigma \lambda_r t}.
\end{equation}
This allows us to find asymptotic limits for the coefficients during synchronization, $\alpha_r^\infty$. These coefficients describe the phase offsets once synchronized. We find that in a synchronizing regime on a weighted graph, the asymptotic behavior of the $r$th coefficient approaches
\begin{equation}\label{eqn:limitcoeff}
    \alpha_r^{\infty} =\frac{\omega^{(r)}}{\lambda_r \sigma}.
\end{equation}
We note that the form of this equation does not require the weights of the graph. All structural information related to edge weights, impacting the synchronization in the system, is encoded in the eigenvalues and eigenvectors of the graph Laplacian. Thus this limit is significant in that it reflects the structure present in the system.

\subsection{Spectral Kuramoto Clustering}

Previous work has shown that a weighted AEP can induce cluster synchronized behavior in the Kuramoto model.\cite{kato_cluster_2023,menara_stability_2020,schaub_graph_2016} We use the spectral properties of AEPs to recast this behavior in the context of the coefficient equation. This framework provides us with inroads to explore more general synchronized behavior.

Using the coefficient end behavior in equation \ref{eqn:limitcoeff}, we can describe the impact of partitions and graph structure on synchronization.
If the natural frequency vector has constant values within the cells of an AEP, then by corollary \ref{Cor:SubBasis}, $\vec{\omega}$ is described completely within the scaled up eigenbasis of the corresponding quotient Laplacian. By orthogonality, all eigenvectors not from the quotient Laplacian must be orthogonal to $\vec{\omega}$. Thus the linearized coefficient limit is only nonzero for the structural eigenvectors. As a result, only the structural eigenvectors are nonzero in the end-state eigendecomposition of the system (figure \ref{fig:2 AEP Phase decomp}). Expressed in terms of the dynamics of the original Kuramoto system, this suggests that the system experiences cluster synchronization, where the phases of oscillators within the clusters of the AEP remain equal. The significance of this choice is that the natural frequency distribution over the vertices excites the structural modes rather than the nonstructural ones, allowing for these exact clustered dynamics. This aligns with work such as\cite{fan_eigenvector-based_2023} which has found that synchronized clusters in similar models of chaotic oscillators require identical elements in some eigenvector (figure \ref{fig:2 AEP Phase decomp}).

    We note that since equation \ref{eqn:limitcoeff} is from a linearization of the system, the approximation it gives diverges for large end state values of coefficients. For fully synchronized systems, the range of error is proportional to the coefficient in question (figure \ref{fig:Linear Div}). Thus the modes orthogonal to the natural frequency vector respect the approximation, and the qualitative dynamics as a whole remain the same.
    
    \begin{figure}
        \centering
        \includegraphics[width=\linewidth]{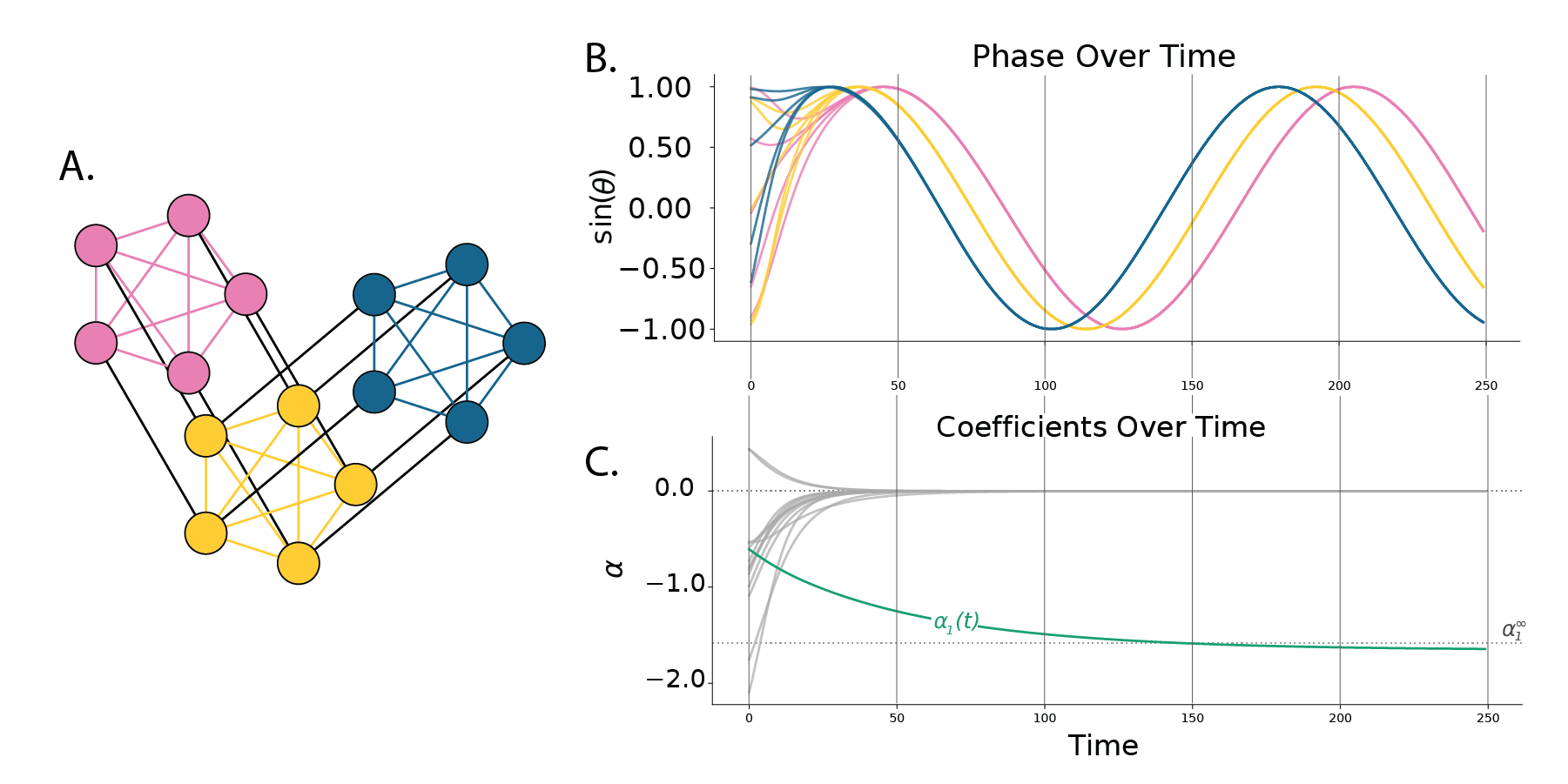}
        \caption{This example highlights the impact of structural eigenvectors on clustered synchronization. \textbf{A. The network} consists of 15 vertices, color-coded according to an almost equitable partition (AEP) of the network. A Kuramoto model is placed on the graph, with coupling $\sigma=2$ and natural frequencies set to be identical within each cell of the AEP and distinct across cells (respectively the pink, yellow, blue cells have frequencies 3,4,5). This frequency assignment reinforces the structural partition. \textbf{B. The resulting phase dynamics} show convergence to a phase-clustered regime consistent with the AEP. \textbf{C. Time evolution of the projection coefficients onto the Laplacian eigenbasis.} Cluster synchronization corresponds to the dominance of the structural subspace, here captured primarily by $\alpha_1$, the eigenmode associated with $\lambda_1$, whose asymptotic limit is approximated by $\alpha_1^\infty$.}
    \label{fig:2 AEP Phase decomp}
\end{figure}

We may also notice that the end state coefficients depend on the eigenvalues as well. This recovers the intuitive notion that more assortatively partitioned networks will synchronize more readily in the Kuramoto model. We saw before that assortative structural eigenvectors generally correspond to lower eigenvalues, and here we observe that the lower eigenvalues have larger end coefficients. This means that when keeping the alignment to the natural frequency vector the same, assortative partitions which have lower corresponding eigenvalues support larger differences in phase between clusters. Or, equivalently, they support cluster synchronization at a lower coupling strength, $\sigma$. We can see this phenomenon supported in figure \ref{fig:big_net_sync}.

\begin{figure}
    \centering
    \includegraphics[width=0.7\linewidth]{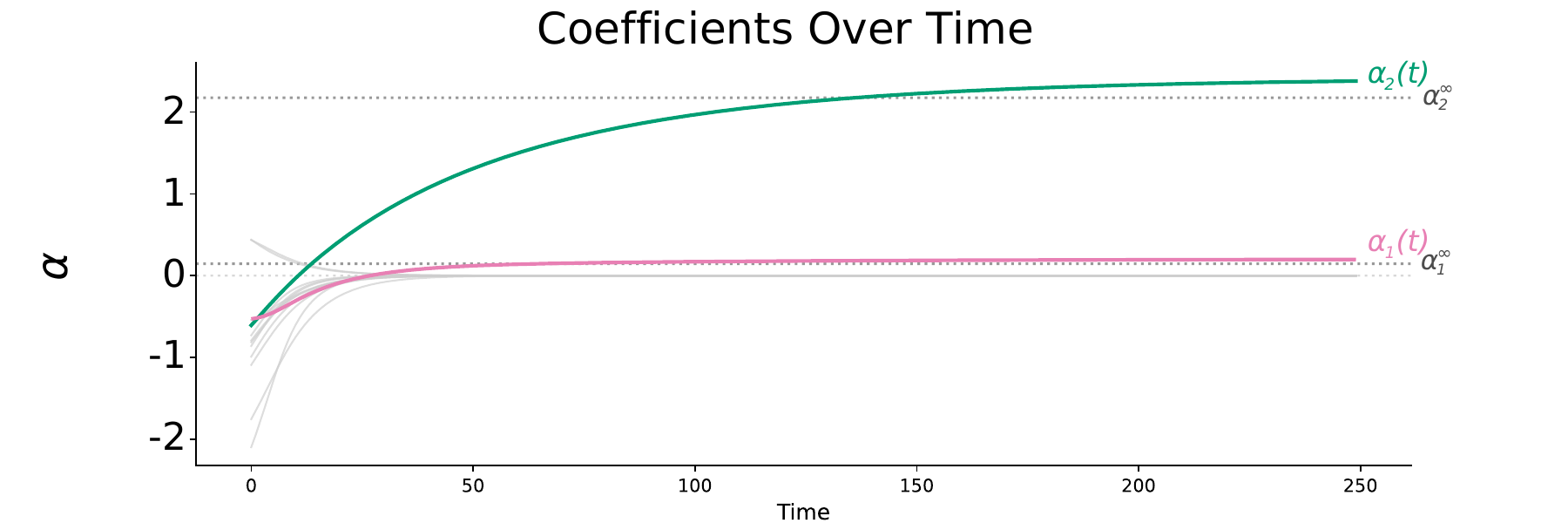}
    \caption{An example to highlight the convergence behavior of spectral coefficients using the toy network presented in figure \ref{fig:2 AEP Phase decomp}.A with the same AEP on the vertices but different parameters in the Kuramoto model. In this example, we set the natural frequencies to be identical within each cell of the AEP and distinct across cells (respectively the pink, yellow, blue cells have frequencies 3,4,9), and we increased the coupling strength between oscillators to 5.
    which leads to the coeffients $\alpha$ having a slightly different behavior than the ones in figure \ref{fig:2 AEP Phase decomp}.C. We see that the deviation between the time-dependent coefficients and their linearized limit values decreases with the magnitude of the corresponding mode. In particular, $\alpha_1(t)$ converges more closely to $\alpha_1^\infty$ than the higher-index mode $\alpha_2(t)$, exemplifying this behavior.}
    \label{fig:Linear Div}
\end{figure}

\subsection{Transient Cluster Synchronization}\label{sec:transient}

We can use the same framework to understand full phase synchronization as well. 
A necessary condition for full phase synchronization in the Kuramoto model is uniform natural frequencies. Thus we assume $\omega_i=\delta$ for all vertices $i$. It follows $\vec{\omega}=\delta \mathbb{1}$. Recall that the smallest eigenvalue is $\lambda_0=0$ and the corresponding eigenvector $v_0$ is the constant vector. Thus in the eigenbasis of the graph Laplacian, $\vec{\omega}$ is entirely described by $v_0$. It follows that $\omega^{(r)}=0$ for all $r\neq 0$, and therefore $\alpha^\infty_r=0$ for all $r\neq 0$. Thus the limit for the dynamics for large $t$ is
\[
\vec{\theta}(t) = \alpha_0(t) v_0 = \sqrt{N} \delta t \mathbb{1}.
\]
This describes a fully synchronized system of oscillators, recovering the end state of a fully phase synchronized system.

Using the equation \ref{eqn:linearcoeff}, we can see the path the eigenmodes take to equilibrium. We observe that in the described setting, for $r\neq 0$ the solution follows $\alpha_r(t)=\alpha_r(0)e^{-\sigma\lambda_r t}$. This indicates faster convergence for the larger eigenvalue modes. In other words, with comparable initial values, a difference in eigenvalues leads to the separation of convergence rates to synchronization. This explains the observations of previous works,\cite{de_domenico_diffusion_2017, arenas_synchronization_2006} which found convergence to synchrony as being in clusters, with delays in transient clustering regimes being proportional to the gaps in the eigenspectrum of the Laplacian. If the network has non trivial AEPs, those would show up in the eigenspectrum and if their placement in the graph Laplacian spectrum is low, then their coefficients will be the slowest to approach zero (figure \ref{fig:big_net_sync} C.). In the presence of an AEP, we define cluster synchronization as a restriction of the dynamics to the structural sub-basis. Using corollary \ref{Cor:SubBasis} this means that the described scenario results in transient clustered behavior while approaching equilibrium (figure \ref{fig:big_net_sync} B). The key here is that gaps in the eigenspectrum can relate to regime changes when approaching synchronization because the rate of asymptotic decay to the limit of a coefficient is determined by the eigenvalue of that mode. It is summarized in proposition \ref{prop:transient}, with the proof in appendix \ref{apdx:transient}.

\begin{prop}\label{prop:transient}
     Let $G$ be a connected graph with an AEP, $\pi$. Suppose $\pi$ is sufficiently assortative in that the $p\geq 1$ smallest eigenvalues of the graph Laplacian, $L$, are associated to corresponding structural eigenvectors. Assume there are no repeated eigenvalues. Consider a Kuramoto model on $G$ with uniform natural frequencies, and assume $\alpha_i(0)>0$ for some $0<i\leq p$, Then in the linearized solution the system will exhibit transient cluster synchronization along $\pi$.
\end{prop}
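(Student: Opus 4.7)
The plan is to invoke the linearized coefficient dynamics \eqref{eqn:linearcoeff} and exploit the spectral gap that the hypotheses guarantee between structural and non-structural modes. With uniform natural frequencies, $\vec{\omega} = \delta\,\mathbb{1}$ is a scalar multiple of the zeroth eigenvector $v_0$, so by orthogonality $\omega^{(r)} = \vec{\omega}\cdot v_r = 0$ for every $r \neq 0$. Plugging this into \eqref{eqn:linearcoeff} collapses each non-trivial coefficient to pure exponential decay,
\[
\alpha_r(t) \;=\; \alpha_r(0)\, e^{-\sigma \lambda_r t}, \qquad r \neq 0,
\]
while the zero mode drifts as $\alpha_0(t) = \sqrt{N}\,\delta\, t$.

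Next I would use the eigenvalue ordering. List the eigenvalues as $0 = \lambda_0 < \lambda_1 < \cdots < \lambda_{n-1}$; strict inequalities follow from connectedness of $G$ together with the no-repeated-eigenvalues hypothesis. By assortativity, $v_0, \ldots, v_{p-1}$ are the structural eigenvectors (lifts via $P^\pi$ of eigenvectors of $L^\pi$) while $v_p, \ldots, v_{n-1}$ are not, and the gap $\Delta := \lambda_p - \lambda_{p-1} > 0$ makes every non-structural decay rate $\sigma\lambda_r$ strictly larger than every non-trivial structural decay rate. Splitting the linearized solution as
\[
\vec{\theta}(t) \;=\; \underbrace{\sum_{r=0}^{p-1} \alpha_r(t)\, v_r}_{\vec{\theta}_{\mathrm{str}}(t)} \;+\; \underbrace{\sum_{r=p}^{n-1} \alpha_r(0)\, e^{-\sigma \lambda_r t}\, v_r}_{\vec{\theta}_{\mathrm{ns}}(t)},
\]
Corollary \ref{Cor:SubBasis} places $\vec{\theta}_{\mathrm{str}}(t)$ inside the column space of $P^\pi$ at every time, so its entries are constant on each cell of $\pi$ --- precisely a cluster-synchronized configuration along $\pi$.

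It remains to exhibit a transient window on which $\vec{\theta}_{\mathrm{ns}}$ is negligible compared with the non-drift part of $\vec{\theta}_{\mathrm{str}}$. Given any tolerance $\epsilon > 0$ and any initial condition with at least one nonzero $\alpha_r(0)$ for $r \in \{1, \ldots, p-1\}$, the two-timescale separation produced by $\Delta$ yields an interval $[t_1, t_2]$ with $t_1$ of order $(\sigma \lambda_p)^{-1}\log(\|\vec{\theta}_{\mathrm{ns}}(0)\|/\epsilon)$ and $t_2$ of order $(\sigma \lambda_{p-1})^{-1}\log(\|\vec{\theta}_{\mathrm{str}}(0)-\alpha_0(0)v_0\|/\epsilon)$, on which $\|\vec{\theta}_{\mathrm{ns}}(t)\| < \epsilon$ while the structural non-drift part still dominates. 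On that window the solution is, up to $O(\epsilon)$, cell-wise constant on $\pi$; afterwards the residual structural modes also decay and the system collapses into full synchrony. The ``up to the choice of initial condition'' caveat in the statement is exactly the non-degeneracy condition $\alpha_r(0) \neq 0$ for some structural $r \neq 0$, without which no clustered regime is ever visited. The main obstacle I foresee is definitional rather than technical: the qualitative two-timescale picture is immediate from $\Delta > 0$, but turning it into a sharp statement requires committing to a definition of ``transient cluster synchronization'', and the initial-condition caveat is precisely the non-degeneracy hypothesis such a definition forces.
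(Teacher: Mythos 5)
Your argument is correct and follows essentially the same route as the paper, which proves this proposition only informally in the preceding paragraph: uniform $\vec{\omega}$ kills $\omega^{(r)}$ for $r\neq 0$, the linearized coefficients reduce to $\alpha_r(0)e^{-\sigma\lambda_r t}$, the assortativity hypothesis places the non-structural modes at larger eigenvalues so they decay first, and Corollary \ref{Cor:SubBasis} identifies the surviving structural sub-basis with cluster synchronization along $\pi$. Your version is in fact more quantitative than the paper's (the explicit window $[t_1,t_2]$ and the non-degeneracy reading of ``up to the choice of initial condition''), and your closing remark correctly identifies that the only real gap is the paper's lack of a precise definition of ``transient cluster synchronization.''
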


As usual the above clustered behavior is defined as the domination of the structural sub-basis. However, it is important to note the impact of the choice of initial condition. All initial conditions satisfying \ref{prop:transient} will eventually be dominated by the assortative structural eigenmodes, but choices in which the structural modes have small initial condition will result in all nonzero modes being small by the time the structural modes dominate. In the phase dynamics this is reflected as clustered dynamics where the clusters are very close in phase to one another. 

Since AEPs can occur at multiple scales, this can also describe transitions through multiple clustering regimes.
Consider a graph $G$ with a sequence of possible almost equitable partitions $\pi_1, \pi_2, \dots, \pi_k$ such that $\pi_1 \prec \pi_2 \prec \dots \prec \pi_k$, meaning each partition is a refinement of the next (i.e., each cell in $\pi_i$ is fully contained in a cell in $\pi_{i+1}$). We say this sequence defines a $k$-layer hierarchical AEP clustering of $G$.\\

In a graph with assortative hierarchical clustering, i.e. each $\pi_i$ is assortative as defined in section \ref{sec:Evecbais}, then the structural eigenvectors of coarser AEPs generically have lower eigenvalue than the finer ones.\cite{krzakala_spectral_2013, schaub_hierarchical_2023} If assortative enough, these structural eigenvectors occupy the lower frequency modes (see an example of such a network in figure \ref{fig:big_net_sync} A). While approaching full synchronization this would lead to higher frequency, non structural, modes dying initially to reveal the more slowly decaying structural modes, whose overlap would manifest as a finely clustered behavior (see lemma \ref{lem:refinement} in appendix \ref{apdx:transient}). Then in time the lower frequency structural modes dominate the higher frequency structural modes as well. Assuming similar starting values, the duration of these separate clustering regimes is proportional to the gap in the eigenspectrum between sequential modes. The following result naturally follows from the previous, with the full proof in appendix \ref{apdx:transient} and example in figure \ref{fig:big_net_sync}.

\begin{prop}\label{prop:transienthier}
    Let $G$ be a graph with $k$-layer hierarchical AEP clustering defined by a sequence of nested partitions $\pi_1 \prec \pi_2 \prec \dots \prec \pi_k$. Assume assortativity such that for some choice of indices $0=p_0<p_1<\cdots<p_k<n$, for each $0<i\leq k$ the eigenvalues $\lambda_{p_{i-1}+1}<\dotsc < \lambda_{p_i}$ correspond to structural eigenvectors of $\pi_i$. Assume all eigenvalues are simple (non-repeated).
    Consider a Kuramoto model on $G$ with uniform natural frequencies. Then up to the choice of initial condition, in the linearized regime the system will pass through a sequence of transient cluster synchronization states corresponding to each level of the AEP hierarchy, before ultimately converging to global synchrony.
\end{prop}

\begin{figure}
    \centering
    \includegraphics[width=\linewidth]{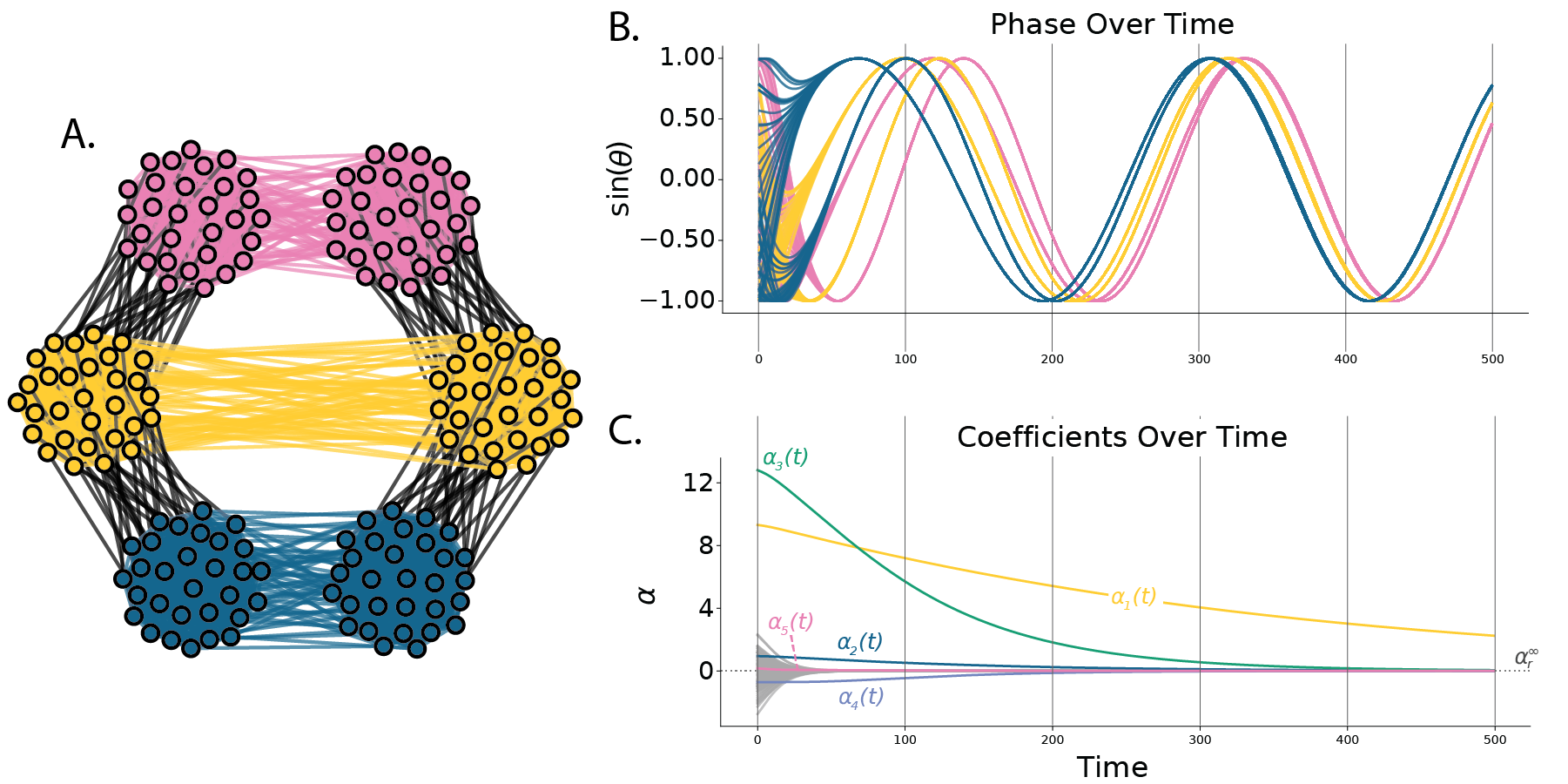}
    \caption{Hierarchically related structural modes. \textbf{A. A hierarchically clustered network with 180 vertices.} Vertex positions illustrate one of the allowable AEPs with 6 clusters, while vertex and edge coloring highlight a coarser AEP consisting of 3 clusters. To ensure full system synchronization, the natural frequencies were uniformly set to 3 across all vertices along with a coupling of $\sigma=0.3$. \textbf{B. Phase dynamics showing convergence to full synchronization.} Four distinct dynamical stages are observed: an initial disordered regime (time steps 0–50), followed by the formation of 6 clusters (time steps 50–350); these then merge into 3 larger clusters, which gradually converge into a single synchronized group. \textbf{C. Time evolution of projection coefficients onto the Laplacian eigenbasis}, illustrating the same sequence of transitions. Initially, a wide distribution of coefficients reflects disorder; this is followed by the decay of nonstructural modes ($\alpha_r$, $r>5$), leaving only the structural components ($\alpha_r$, $r\leq5$). Subsequently, the finer-scale structural modes decay, and the dynamics are eventually dominated by an eigenmode corresponding to the coarsest AEP, $\alpha_1$. The ultimate convergence to global phase synchronization is reflected by all coefficients asymptotically approaching their limits $\alpha_r^\infty=0$ for all $r$, a result enabled by the uniform assignment of natural frequencies across all vertices.}
    \label{fig:big_net_sync}
\end{figure}

\section{Stability of Structurally motivated Clustering}

In section \ref{sec:spectral_description_of_synchronized_clusters} we have shown how with the coefficient equation, we can frame cluster synchronization in terms of the dynamics of AEP structural eigenvectors of the graph Laplacian. The natural question is then what other kinds of clustering the Laplacian basis could allow for. Framing this problem in the linear algebra of an eigenbasis and the AEP provides a natural direction for loosening the structural restrictions for synchronized clusters. Strict AEPs are rare and the associated full phase synchronization within clusters may be stronger than what is required for application. The concept of a weakened AEP has been considered previously, such as in the model of stochastically equitable partitions,\cite{schaub_hierarchical_2023} and even in the setting of the Kuramoto model with a bound on the edge weight sums between partitions.\cite{kato_cluster_2023} We approach this question from the perspective of the spectral theory behind an AEP to see how sensitive the eigenbasis is to perturbation, and how a spectral theory around its weakening could fit into the spectral description of synchronization presented in this paper.

\subsection{Approximate Almost Equitable Partitions}

In theorem \ref{thm:quotientAEP} we saw that given a partition $\pi$ with indicator matrix $P$ on a graph with Laplacian $L$, a sufficient and necessary condition for $\pi$ to be an AEP is that $LP=PL^\pi$. The entry $(LP)_{ip}$ corresponds to the edge weight sum of vertex $i$ to all its neighbors in cluster $p$ of $\pi$. Meanwhile the matrix $PL^\pi$ inhabits the same form, but in the $p$th column, assigns to each vertex in a given cluster the average edge weight sum from that cluster to cluster $p$. Thus we define $E = PL^\pi - LP$ which captures the deviation of the vertex edge weight sums from the averages and allows us to approximate the spectral AEP property. The minimization of a similar matrix was explored in\cite{scholkemper_optimization-based_nodate} in the context of finding regular equitable partitions. 

Let us consider what the introduction of $E$ allows us to say about the eigenvectors of $L$. If $v$ is an eigenvector of $L^\pi$ with eigenvalue $\lambda$, then
\begin{align*}
    L^\pi v = \lambda v &\implies PL^\pi v = \lambda P v\\
    &\implies (LP+E)v=\lambda Pv\\
    &\implies L(Pv) = \lambda(Pv) - Ev\\
    &\implies Lw = \lambda w - \epsilon.
\end{align*}
Where $w=Pv$ is the scaled up quotient Laplacian eigenvector and $\epsilon = Ev$ is an error term expressing how the transformation of $v$ by $E$ affects the final equation. We call this $\epsilon$ the equitable error and are concerned with cases in which this error is small, indicating a minor deviation from the spectral AEP property. 

Given a graph $G$ and its graph laplacian $L'$, let $L'=L+N$ where $L$ is a graph laplacian of matching dimension which admits an AEP $\pi$. Then $N$ captures the deviation of the true Laplacian $L'$ from the almost equitable $L$. It follows 
\begin{align*}
    E &= L'P - P (L')^\pi  \\
    &= (L+N)P - P(P^T P)^{-1}P^T(L+N)\\
    &=LP - P(P^T P)^{-1}P^TL + NP - P(P^T P)^{-1}P^T N\\
    &=NP - P(P^T P)^{-1}P^TN\\
    &=NP - P N^\pi.
\end{align*}
When put in this form, the Laplacian L, which supports an AEP cancels out. This shows how $E$ can be described in terms of the deviation from a graph Laplacian supporting the AEP. Thus it is the placement of the noise over the graph that determines the equitable error. This suggests that the probability distribution over which error is introduced could have profound impact on how it is represented in the equitable error (Appendix \ref{app:SBMlimit} explores one such avenue). We can bound the norm of $\epsilon$ and relate the values of $E$ to the size of $\epsilon$ more explicitly.

\begin{prop}
    Let $G$ be a weighted graph with graph Laplacian $L$. Let $\pi$ be a vertex partition. The equitable error of any eigenvector $v$ of $L^\pi$ is bounded as $\|\epsilon\| \leq \sigma_1(E)\|v\|\leq 2k\|v\| \max_i  \sum_{j}\left|e_{ij}\right|$ where $\sigma_1(E)$ is the largest singular value of $E$ and $k$ is the number of cells in $\pi$.
\end{prop}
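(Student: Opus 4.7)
My plan is to split the displayed chain into its two inequalities and prove each one separately.

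The first inequality, $\|\epsilon\|\leq\sigma_1(E)\|v\|$, is immediate from the variational definition of the largest singular value applied to $\epsilon = Ev$: by the SVD, $\sigma_1(E) = \sup_{w\neq 0}\|Ew\|_2/\|w\|_2$, so taking $w = v$ yields the bound. No structural information about $E$ is used at this stage.

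The second inequality, $\sigma_1(E)\leq 2k\max_i\sum_j|e_{ij}|$, is where the structure of $E = PL^\pi - LP$ enters. Write $M := \max_i\sum_j|e_{ij}|$ for the maximum row $\ell_1$-norm of $E$. I would lean on two observations. First, because both $L$ and $L^\pi$ are graph Laplacians, $L\mathbb{1}_n = 0$ and $L^\pi\mathbb{1}_k = 0$ (noting that $P\mathbb{1}_k = \mathbb{1}_n$), so every row of $E$ has vanishing sum; combined with the row $\ell_1$-bound $M$, the zero-sum constraint forces $\max_{j}|e_{ij}|\leq M/2$ for every row. Second, $E\in\R^{n\times k}$ has only $k$ columns, so the rank-one decomposition $E = \sum_{j=1}^{k}E_{\cdot j}e_j^{T}$ together with the triangle inequality for the operator norm and the identity $\sigma_1(xy^T) = \|x\|_2\|y\|_2$ gives
\[
\sigma_1(E)\;\leq\;\sum_{j=1}^{k}\|E_{\cdot j}\|_2\;\leq\;k\max_j\|E_{\cdot j}\|_2.
\]
The remaining step is to bound each column $\ell_2$-norm by $2M$, using the per-entry bound $|e_{ij}|\leq M/2$ together with the cluster-wise mean-zero identity $\sum_{i\in V_m}e_{ij} = 0$ (an immediate consequence of $(L^\pi)_{ml} = |V_m|^{-1}\sum_{i\in V_m}(LP)_{il}$).

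The hardest step will be the last one. Naively $\|E_{\cdot j}\|_2\leq M\sqrt{n}/2$, which introduces an $n$-dependence the stated bound does not carry, so the cluster-wise zero sums must be used carefully to trade cluster sizes against the cluster count $k$. An alternative route that avoids the column decomposition is to pass through Frobenius, $\sigma_1(E)\leq\|E\|_F$, bounding $\|E\|_F^2\leq\sum_i(\max_j|e_{ij}|)(\sum_j|e_{ij}|)\leq nM^2/2$; either route ultimately reduces to the same combinatorial sharpening, and both reveal the constant $2k$ to be a convenient overestimate rather than a tight one.
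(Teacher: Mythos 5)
Your first inequality coincides with the paper's: $\|\epsilon\|=\|Ev\|\le\sigma_1(E)\|v\|$ is just the variational characterization of the largest singular value, and nothing more is needed there. The genuine gap is in the second inequality, and you have in effect diagnosed it yourself: your plan terminates in the unproved claim $\|E_{\cdot j}\|_2\le 2M$, and both routes you sketch (the rank-one column decomposition and the Frobenius bound) stall at something of order $\sqrt{n}\,M$ rather than a bound depending only on $k$ and $M$. A proof whose last line is ``the combinatorial sharpening remains to be done'' is not a proof, so as submitted the proposal is incomplete. (Your two structural observations are themselves correct: $E\mathbb{1}_k=P L^\pi\mathbb{1}_k-LP\mathbb{1}_k=0$ does force $|e_{ij}|\le M/2$, and the cluster-wise column sums do vanish.)

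You should know, however, that the obstruction you hit is real and not a failure of ingenuity. The paper closes the same gap via $\sigma_1(E)=\max_{x\neq 0}\|Ex\|_2/\|x\|_2\le \max_{x\neq0}\sqrt{k}\|Ex\|_\infty/\bigl((1/\sqrt{k})\|x\|_\infty\bigr)$, but the step $\|Ex\|_2\le\sqrt{k}\|Ex\|_\infty$ is invalid: $Ex$ lives in $\R^n$, not $\R^k$, so the correct constant is $\sqrt{n}$ (and the $1/\sqrt{k}$ in the denominator is superfluous, since $\|x\|_2\ge\|x\|_\infty$). Corrected, the paper's own argument yields $\sigma_1(E)\le\sqrt{n}\,\max_i\sum_j|e_{ij}|$ --- exactly the $n$-dependence you could not remove. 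And it cannot be removed: take $k=2$ cells, so that the zero row sums force $E_{\cdot 2}=-E_{\cdot 1}$ and $E$ has rank one, and let half the vertices of $V_1$ have out-weight $2$ into $V_2$ and half have out-weight $0$ (spreading the cross edges thinly over a large $V_2$). Then every row of $E$ has $\ell_1$-norm at most $2$, so the claimed bound is $2k\max_i\sum_j|e_{ij}|=8$, while $\sigma_1(E)=\|E\|_F\ge\sqrt{2|V_1|}$, which exceeds $8$ as soon as $|V_1|>32$. So the dimension-free constant $2k$ is unobtainable; the bound that both your column decomposition and the corrected operator-norm argument actually deliver is $\|\epsilon\|\le\sigma_1(E)\|v\|\le\sqrt{n}\,\|v\|\max_i\sum_j|e_{ij}|$, and that is where an honest proof has to stop.
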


\begin{proof}
    Recall that for a rectangular matrix $A$ the largest singular value, $\sigma_1(A)$, bounds its products, $\|Av\|\leq \sigma_1(A) \|v\|$. It can be shown that $\max_{x\not=0}\frac{\|Ax\|_\infty}{\|x\|_\infty} = \max_i \sum_{j} |a_{ij}|$. The largest singular value of a matrix is equal to its operator norm. Thus it follows that 
    
    \begin{align*}
        \sigma_1(E)&= \max_{x\not=0}\frac{\|Ex\|}{\|x\|}      &&\text{largest $\sigma_i$ equal to operator norm}\\
        &\leq \max_{x\not=0}\frac{\sqrt{k}\|Ex\|_\infty}{(1/\sqrt{k})\|x\|_\infty}\\
        &= k \max_{x\not=0}\frac{\|Ex\|_\infty}{\|x\|_\infty}\\
        &=k \max_i \sum_{j} |e_{ij}|\\
    \end{align*}
    
    Thus if $v$ is an eigenvector of $L^\pi$, it follows that
    
    \[
    \|\epsilon\| \leq \sigma_1(E)\|v\|\leq 2k\|v\| \max_i  \sum_{j}\left|e_{ij}\right|.
    \]
\end{proof}

The size of this bound is determined by the vertex which differs the most in magnitude from the average out edge weight sums of its cluster. It is also of note that this bound scales with the number of clusters in the partition, $k$. Thus fewer partitions de-emphasize the maximum deviations from the average. This bound represents the absolute worst case scenario, and is not tight. In reality, the vectors $v$ have no reason to be exactly in line with the largest singular vector, so we can expect better results in application.

Bounding the equitable error allows us to indicate how close our structural eigenvectors are to the the scaled up form $Pv$, the form the corresponding structural eigenvectors of an AEP would have.

\begin{prop}\label{prop:Best_Approx}
    Let $G$ be a weighted graph with graph Laplacian $L$. Let $\pi$ be a vertex partition such that the equitable error $\|\epsilon\| \leq \delta$ for eigenvector-value pair $(\lambda,v)$ of $L^\pi$. Then $Pv$ is best approximated by eigenvectors with eigenvalues close to $\lambda$. For any truncated approximation, $u$ using eigenvectors with eigenvalues within $\gamma$ of $\lambda$ and assuming $m$ of these, the bound $\|Pv-u\| \leq (\delta/\gamma) \sqrt{n-m}$ will hold. 
\end{prop}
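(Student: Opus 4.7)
The plan is to lift the relation $L(Pv)=\lambda(Pv)-\epsilon$, already derived in the paragraph preceding the proposition, into the orthonormal eigenbasis of the symmetric matrix $L$. Writing $\{u_i\}_{i=1}^n$ for that basis with $Lu_i=\mu_i u_i$ and expanding
\[
Pv = \sum_{i=1}^n c_i u_i,
\]
applying the operator $L-\lambda I$ and using linearity yields $-\epsilon = \sum_i c_i(\mu_i-\lambda)u_i$. Taking the inner product with $u_j$ on both sides then isolates each coefficient as $c_j(\mu_j-\lambda)=-\langle \epsilon, u_j\rangle$, which in turn gives the per-mode control
\[
|c_j| \leq \frac{|\langle\epsilon,u_j\rangle|}{|\mu_j-\lambda|} \leq \frac{\|\epsilon\|}{|\mu_j-\lambda|} \leq \frac{\delta}{|\mu_j-\lambda|}
\]
whenever $\mu_j\neq\lambda$.

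This per-mode inequality is the engine of the proof and already establishes the qualitative half of the statement: the weights $c_j$ in the eigenexpansion of $Pv$ decay with the distance between $\mu_j$ and $\lambda$, so $Pv$ is concentrated on eigenvectors of $L$ whose eigenvalues sit close to $\lambda$. For the quantitative half, I would let $u$ denote the truncation of the above expansion obtained by discarding every $u_i$ with $|\mu_i-\lambda|\geq\gamma$, so that by construction $u$ lies in the span of the $m$ eigenvectors within $\gamma$ of $\lambda$. Orthonormality of $\{u_i\}$ then gives
\[
\|Pv - u\|^2 = \sum_{i:\,|\mu_i-\lambda|\geq \gamma} c_i^2 \leq (n-m)\,\frac{\delta^2}{\gamma^2},
\]
where each term is bounded by $\delta^2/\gamma^2$ from the per-mode inequality and there are at most $n-m$ such indices. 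Taking square roots produces the stated bound $\|Pv-u\|\leq (\delta/\gamma)\sqrt{n-m}$.

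The argument is short and essentially amounts to a perturbation expansion in the eigenbasis, so there is no deep obstacle; the work is in being careful about edge cases and about what ``truncated approximation'' is meant to denote. In particular, if some $\mu_j=\lambda$ exactly, the relation $c_j(\mu_j-\lambda)=-\langle\epsilon,u_j\rangle$ only constrains $\epsilon$ (forcing $\langle\epsilon,u_j\rangle=0$) rather than $c_j$; however such modes sit inside the $\gamma$-window and are therefore absorbed into $u$, not into the residual, so they do not affect the bound. One should also note that the proposition takes the loose per-coefficient route rather than the tighter Parseval identity $\sum_i c_i^2(\mu_i-\lambda)^2=\|\epsilon\|^2$, which is what introduces the $\sqrt{n-m}$ factor.
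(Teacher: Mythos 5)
Your proposal is correct and follows essentially the same route as the paper's proof: expand $Pv$ and $\epsilon$ in the orthonormal eigenbasis of $L$, extract the per-mode relation $c_j(\mu_j-\lambda)=-\langle\epsilon,u_j\rangle$, bound each coefficient by $\delta/|\mu_j-\lambda|$, and sum over the at most $n-m$ discarded modes. Your remarks on the $\mu_j=\lambda$ edge case and on the looseness relative to the Parseval identity are sound additions that the paper does not make explicit.
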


\begin{proof}
    Consider the basis of eigenvectors of $L$. The Laplacian is symmetric and therefore diagonalizable since the graph is undirected. Therefore its eigenvalues form an orthogonal basis.  Let $\{v_1,\dotsc,v_n\}$ be a set of normal representatives, forming the spectral basis of $L$.

    With the choice of partition $\pi$, we may define the indicator matrix $P$, quotient Laplacian $L^\pi$, and consequently $E=LP-PL^\pi$. Let $(\lambda,v)$ be an eigenvalue eigenvector pair of $L^\pi$. Following from above, we see that $L^\pi v=\lambda v \implies Lw= \lambda w - \epsilon$ where $w=Pv$ and $\epsilon = Ev$.
    
    Consider the expansions $w=\alpha_1 v_1+\alpha_2 v_2+\dotsc+\alpha_n v_n$ and $\epsilon=\beta_1 v_1+\beta_2 v_2+\dotsc+\beta_n v_n$. It follows

    \begin{align*}
        Lw=\lambda w -\epsilon &\implies L(\alpha_1 v_1+\dotsc+\alpha_n v_n) = \lambda(\alpha_1 v_1+\dotsc+\alpha_n v_n)- \epsilon\\
        &\implies 
        \alpha_1 \lambda_1 v_1+\dotsc+\alpha_n \lambda_n v_n = (\lambda \alpha_1 - \beta_1 ) v_1+\dotsc+ (\lambda \alpha_n -\beta_n) v_n\\
        \implies 
        0 &= (\lambda \alpha_1 - \beta_1 - \alpha_1\lambda_1 ) v_1+\dotsc+ (\lambda \alpha_n -\beta_n -\alpha_n\lambda_n) v_n\\
        &= ((\lambda-\lambda_1) \alpha_1 - \beta_1) v_1+\dotsc+ ((\lambda-\lambda_n) \alpha_n -\beta_n) v_n\\
        \implies 
        0 &= ((\lambda-\lambda_i)\alpha_i-\beta_i) \\&\hspace{20pt}\text{for all $i$, by orthogonality}
    \end{align*}
    By the assumption $\|\epsilon\|\leq \delta$ it follows $|\beta_i|\leq\delta$. 
    Suppose that $H$ is the basis transformation matrix for the eigenbasis of $L$ (its rows are the $v_i$). Then $\|\beta\|^2=\langle 
    \beta,\beta \rangle=\langle 
    H\epsilon,H \epsilon \rangle=\langle 
    \epsilon,H^TH\epsilon \rangle=\langle 
    \epsilon,\epsilon \rangle=\|\epsilon\|^2$ since $\langle \cdot,\cdot\rangle$ is a norm and $H$ is orthonormal. Thus $\|\epsilon \|=\|\beta\|$ and therefore any bound on the norm of $\epsilon$ carries over to the norm of $\beta$. Since $\|\cdot\|$ denotes the euclidean norm, it follows $|\beta_i|\leq \|\beta\|$ and therefore that $|\beta_i|\leq \delta$ for $1\leq i\leq n$
    
    Recall that $L$ is positive semidefinite, and therefore has only non negative eigenvalues. From the above result we have that $\beta_i=(\lambda-\lambda_i)\alpha_i$ for $1\leq i\leq n$. Thus $|(\lambda-\lambda_i)\alpha_i| \leq \delta$ for all $i$.
    
    It follows that when $\delta$ is small, $(\lambda-\lambda_i)$ or $\alpha_i$ is small. For a fixed $\beta_i$ value, the further away $\lambda_i$ is from $\lambda$, the smaller $\alpha_i$ must be. This means that eigenvectors of $L$ with eigenvalues close to $\lambda$ can have larger coefficients $\alpha_i$ and therefore contain the most information of $w$. Let $A=\{a_1,\cdots,a_m\}$ be the set of coefficients for eigenvalues within $\gamma$ of $\lambda$. For all other eigenvalues, $\alpha_i$ is likely small. Therefore $u= \alpha_{a_1}v_{a_1}+\cdots+\alpha_{a_m}v_{a_m}$ is our approximation of $Pv$ in the spectral basis. This allows for the derivation of a formal bound.
    \begin{align*}
        \|Pv - u\|^2 &= \|\sum_{i\not\in A} \alpha_i v_i\|^2\\
        &= \sum_{i\not\in A} |\alpha_i|^2    &\text{as shown earlier for basis change}\\
        &\leq \sum_{i\not\in A} \left(\frac{\delta}{|\lambda-\lambda_i|}\right)^2\\
        &= \delta^2 \sum_{i\not\in A} \left(\lambda-\lambda_i\right)^{-2}\\
        &\leq \delta^2 \sum_{i\not\in A} \gamma^{-2}\\
        &= (\delta/\gamma)^{2} (n-m) 
    \end{align*}
    Thus $\|Pv-u\| \leq (\delta/\gamma) \sqrt{n-m}$.

    This tells us that in a partition on a graph that deviates from an AEP mildly, the eigenvectors with eigenvalues close to $\lambda$ describe $Pv$ well.
\end{proof}

This characterizes an eigenvector whose eigenvalue is close to that of $Pv$ as being a good approximation of it. As $\|\epsilon\|$ approaches zero, $\|Pv-u\|$ does as well for any truncated approximation $u$. In particular, when $\delta$ is small, then with a $\gamma$ small enough to only encompass a single eigenvalue, the corresponding  eigenvector approximates $Pv$ well. This shows a stability in the spectral properties of the almost equitable partition. We call these approximations of AEPs, quasi-equitable partitions ($\delta-$QEPs). In the specific, define a $\delta$-QEP or a QEP at level $\delta$, to be a partition $\pi$ such that $\|E\| \leq \delta$ where $E$ is the deviation matrix $PL^\pi - LP$. We show an example at $\delta=2.4$ in figure \ref{fig:delta_QEP_Cluster} A.
Whereas Kato and Ishii\cite{kato_cluster_2023} define their weakening in terms of a bound on the edge weight sum of vertices to partition cells, we define ours with the norm of a matrix capturing the deviation from the average, which allows it to maintain a relationship with the underlying spectral theory as we have demonstrated above.

\subsection{Weak Cluster Synchronization}
Because our description of cluster synchronization is spectral instead of combinatorial, we can immediately apply the same methodology from AEPs to the structural eigenvectors of $\delta-$QEPs to describe synchronized clusters (figure \ref{fig:delta_QEP_Cluster}). When moving from an AEP to a $\delta-$QEP, we can no longer construct the full eigenbasis of a corresponding quotient Laplacian. Non-structural eigenvectors have nonzero coefficients in the eigendecomposition of a natural frequency vector, $\vec{\omega}$, with constant values within clusters. Information is leaked into the rest of the eigenbasis, and it is precisely this leakage which results in differences in phase within clusters, as seen in figure \ref{fig:delta_QEP_Cluster} B. This is small if the $\delta-$QEP is reasonably close to an AEP. In this case the contribution of the non structural eigenvectors in the description of the natural frequencies results in a small but nonzero end state coefficient for them (figure \ref{fig:delta_QEP_Cluster} C.). Additionally, the structural eigenvectors are no longer completely constant within clusters. The non exact cluster synchronization arises from the difference in scaling moving from the decomposition of the natural frequency vector to the decomposition of the end state of the system.

\begin{figure}
    \centering
    \includegraphics[width=\linewidth]{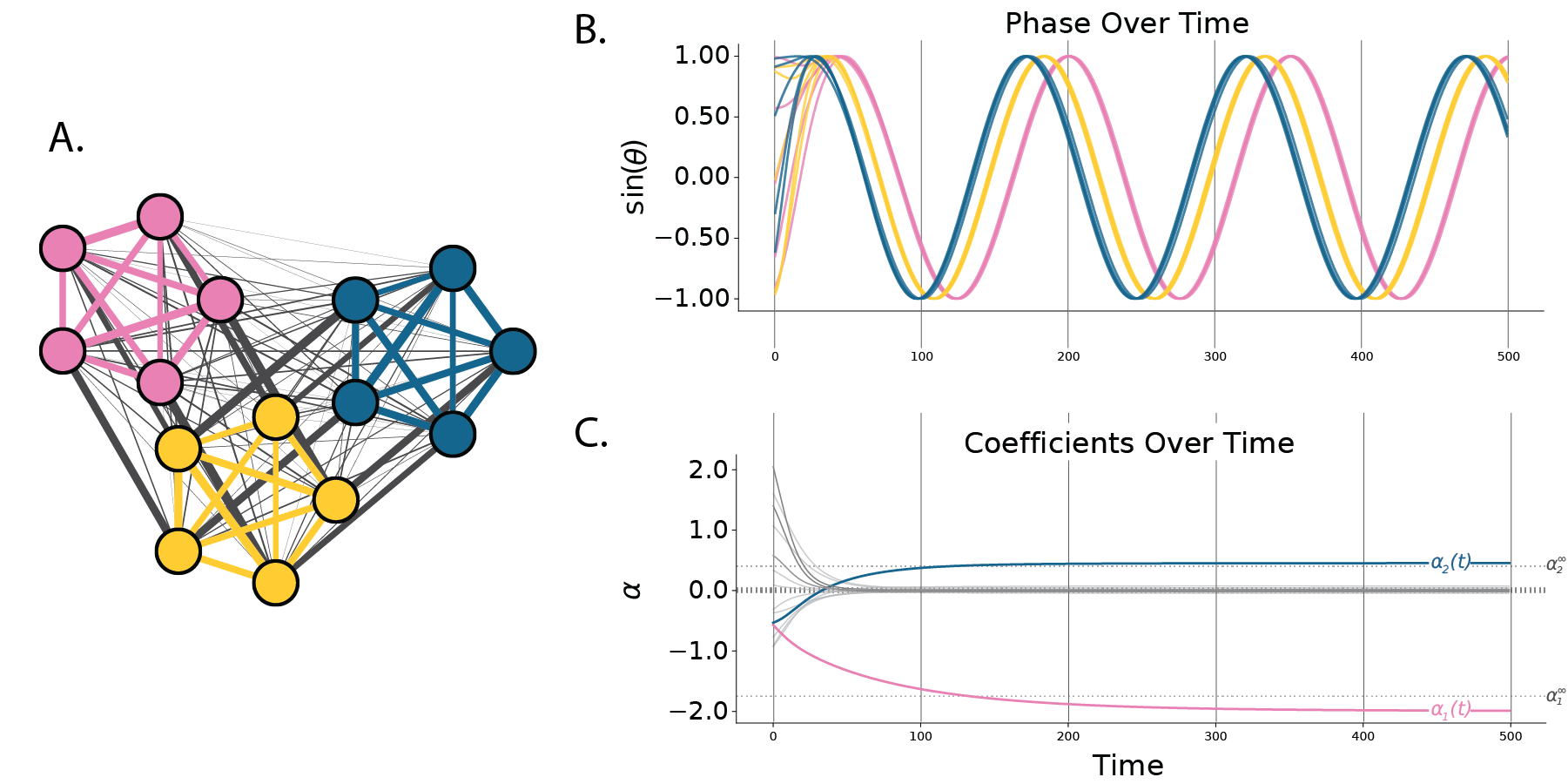}
    \caption{\textbf{A. The weighted network} consists of $15$ vertices and a color coded $\delta-$QEP at level $\delta=2.4$ which it admits. This graph was generated by adding uniform noise to the toy example from figure \ref{fig:2 AEP Phase decomp}.
    \textbf{B. The phase plot} indicates the phase locked cluster synchronization of the Kuramoto model placed on this graph with coupling $\sigma = 1.2$ and identical natural frequencies within each cell (respectfully the pink, yellow, blue cells have frequencies $3,4,5$). This frequency distribution reinforces cluster synchronization by mostly exciting structural modes of the $\delta-QEP$. \textbf{C.Time evolution of the coefficients in the projection onto the Laplacian eigenbasis.} The cluster synchronized state is described almost entirely by the quotient graph sub-basis, $\alpha_1(t)$ and $\alpha_2(t)$, with the leakage of information to the rest of the eigenbasis indicated by the spread of non structural coefficients around zero in C., and slight spread of phases within clusters in B.}
    \label{fig:delta_QEP_Cluster}
\end{figure}

In section \ref{sec:Evecbais} we discussed assortativity and spectrum placement with AEPs. The same argument using random walks can be extended to weighted graphs and $\delta$-QEPs.\cite{saade_spectral_nodate} The proximity of a $\delta$-QEP to a full AEP determines the bound of the variance in the intra cluster values of a structural eigenvector. With a reasonably small $\delta$, we can expect generically few zero crossings to occur within a cluster. Further, since proposition \ref{prop:Best_Approx} found that the best truncated approximation of an AEP structural eigenvector in the spectrum of a close $\delta-$QEP uses the eigenvectors with the nearest eigenvalues, we can expect the transient clustering discussed in section \ref{sec:transient} to behave well in the generalization to $\delta-$QEPs, as the modes needed to describe clusters at a particular scale have close eigenvalues to one another and will consequently decay at a similar rate.

\section{Multi frequency Clustering and Antisymmetric Phase lag}

The presented spectral description of clustered dynamics allows for inroads to other forms of cluster synchronization and extensions to related systems. We show an initial exploration of multi frequency clustering in this framework, then derive an extension of the coefficient equation to include antisymmetric phase lag within the network. 

\subsection{Clustered Dynamics at Separate Frequencies}

The description of cluster synchronization in terms of constant end-state coefficients explored to this point allows for clusters at offset phases, but at the same frequency. We now consider what a more general synchronization regime, where clusters advance at separate frequencies, may look like in this framework. Clustering along an AEP of the graph, this case is still fully described by a restriction to the structural sub-basis, but the coefficients may no longer approach constants. This more general clustered behavior would have structural coefficients as unstable and dynamical, while the nonstructural ones are kept close to zero.

An analytically tractable way to explore this through the coefficient equations is by considering the single unstable mode dynamics. 
Consider a coefficient $1\leq r_1\leq n-1$, and assume non zero modes $r\neq 0,r_1$ have settled into the asymptotic state predicted in the linearized case $\alpha_r^\infty$. We extend the derivation from Kallionatis\cite{kalloniatis_incoherence_2010} to weighted graphs. One can find parameter values of a regime change for the dynamical mode $\alpha_{r_1}$ in a second order approximation. Particularly we see the following, the derivation of which is in appendix \ref{sec:singlemodeweight}.

\begin{equation*}
    \dot{\alpha_{r_1}}(t) = \sigma\lambda_{r_1}\alpha_{r_1}(t) - \sigma x_{r_1} \alpha_{r_1}^2(t),
\end{equation*}
\begin{align*}
    x_{r_1} &= \sum_{s\neq 0,r_1} \frac{\omega^{(s)}}{2 \sigma \lambda_s}\sum_{a=1}^M W_{aa} (e_a^{(r_1)})^3e_a^{(s)}.
\end{align*}

Noting that the above is the logistic equation, the rest of the analysis follows the well known second order ODE framework. The discriminant of the homogeneous solution is
\begin{equation}\label{eqn:discrim}
    \Delta_{r_1}  = \sigma^2\lambda_{r_1}^2 - 4\sigma \omega^{(r_1)}x_{r_1}.
\end{equation}
The variation of the parameter $\sigma$ allows for $\Delta_{r_1}$ to switch signs accompanying a qualitative shift in the behavior of the system, when $\Delta_{r_1}>0$ the solution approaches a fixed point, whereas for $\Delta_{r_1}<0$, the dynamics approach a limit cycle instead. The stable solution for $\Delta_{r_1}>0$ is
\begin{align*}
    \alpha_{r_1}(t) = 
    \frac{(\sigma \lambda_{r_1} + \sqrt{\Delta_{r_1}}) - (\sigma\lambda_{r_1} - \sqrt{\Delta_{r_1}})P_{r_1}e^{\sqrt{\Delta_{r_1}}t}}{2\sigma x_{r_1} (1-P_{r_1}e^{\sqrt{\Delta_{r_1}}t})}
\end{align*}
whereas for $\Delta_{r_1}<0$ the dynamics can be approximated by the following equation within reasonable bounds of the origin.
\begin{align}\label{single_mode_approx}
    \alpha_{r_1}(t) = \frac{1}{2\sigma x_{r_1}}\left( \sigma\lambda_{r_1} + \tan\left\{ \arctan\left[ 2\sigma x_{r_1} \alpha_{r_1} - \sigma \lambda_{r_1}\right] + \sqrt{-\Delta_{r_1}}\frac{t}{2} \right\} \right)
\end{align}

This second order approximation and an accompanying equilibrium analysis suggest limit cycle behavior with the mode $r_1$ when $\Delta_{r_1}<0$. This approximation can yield high values due to the divergence of tangent near $\pi/2$, pulling the system outside the reasonable scope of its approximation. 

Notice that in order to have $\Delta_{r_1}<0$, the value $\sigma\lambda_{r_1}$ must be sufficiently small. This suggests that eigenmodes with lower eigenvalues will more easily support the proposed single periodic dynamical mode behavior. We also note that $\omega^{(r_1)}$ shows up with a negative sign in equation \ref{eqn:discrim}, meaning that for large $\omega^{(r_1)}$ the discriminant is negative. Thus a mode is more able to remain dynamical in this regime if its alignment with the natural frequency vector is high. We see the discriminants for the modes of an example system plotted in figure \ref{fig:single_mode_dynamic}.

\begin{prop}
    Consider a single dynamical mode, $r_1$ as in the above derivation, corresponding to an AEP structural eigenvector without a repeated eigenvalue. Let $\vec{\omega} = a v_{r_1}$ for $a\in \R$. Under these assumptions, the system experiences clustered behavior with at least two clusters at separate instantaneous frequencies.
\end{prop}
\begin{proof}
    Let $\pi= \{V_1,\dotsc,V_k\}$ be the AEP inducing the structural $v_{r_1}$. Since $\vec{\omega} = a v_{r_1}$, it follows that $\alpha_r^{\infty}=0$ for all $r\neq r_1,0$. Consequently $\vec{\theta}= \alpha_{r_1}(t) v_{r_1}$, which implies $\dot{\vec{\theta}} = \dot{\alpha}_{r_1}(t) v_{r_1}$. 

    It follows that $\dot{\theta}_i(t) =\dot{ \alpha}_{r_1}(t) (v_{r_1})_i$. By the assumption on $v_{r_1}$, it follows that $(v_{r_1})_i= (v_{r_1})_j$ for all $i,j\in V_p$, where $1\leq p\leq k$. Thus all vertices within a cluster experience the same angular velocity. Since $r_1\neq 0$, there exist two partition cells such that their corresponding values in the vector $v_{r_1}$ differ. Thus for all $i,j\in V_p$ with $1\leq p\leq k$, we have $\dot{\theta}_i(t) = \dot{\theta}_j(t)$, and for some pair of $1\leq p, q\leq k$ we have $\dot{\theta}_i(t) \neq \dot{\theta}_j(t)$ for any $i\in V_p$ and $j\in V_q$.

    Dividing the angular velocities by $2\pi$, we obtain the desired result for the angular frequency.
\end{proof}

% We note the difference of the above result 

In our testing, we could observe that the approximation described in eq. \ref{single_mode_approx} holds well with AEPs. See the example system in figure \ref{fig:single_mode_dynamic}, with $\Delta_{1}<0$. here we see the system approach a cluster synchronized state quickly and stay cluster synchronized thereafter. The dynamical mode is well approximated by the tangent curve in equation \ref{single_mode_approx} for 2000 time steps, before the predicted divergent behavior activates the structural $\lambda_2$ mode. The system then converges back to the single mode regime of our approximation before repeating. The stability of the periodic behavior in this test is motivated by only the $\lambda_1$ mode achieving a negative discriminant. We see the momentary perturbation of $\lambda_2$ whose discriminant was the 2nd smallest but still positive, suggesting a return to a stable steady state.

This shows that the lower spatial frequencies can support dynamical modes better. In the context of our analysis of structural eigenvectors, this suggests that the eigenvectors of assortative AEPs or $\delta-$QEPs allow for unstable mode behavior more easily. Thus clusters with strong intra cluster connectivity and weak inter cluster connectivity should be able to maintain separate cluster frequencies if the disparity in connectivity is high enough and the natural frequencies align well enough with one of the structural eigenvectors. This intuition also suggests that an analogous property should hold for multiple unstable modes, if they all share reasonably high alignment with the natural frequency vector.

While the discriminant was derived  explicitly for the case of a single dynamical mode, our experiments suggest that it can serve to gauge the capacity of any single mode to be dynamical even in the presence of other dynamical modes. Although we do not have an analytical approximation for it, we find cases in which multiple modes are unstable and share the repeated stair-step, tangent-like form of the single unstable mode. With negative discriminants on two of the lower eigenvalue modes, we observe no activation of external modes and complete restriction to the two unstable and coupled modes. This suggests a stronger stability in the restriction to the two-mode dynamics than with the single mode described above.

\begin{figure}
    \centering
    \includegraphics[width=\linewidth]{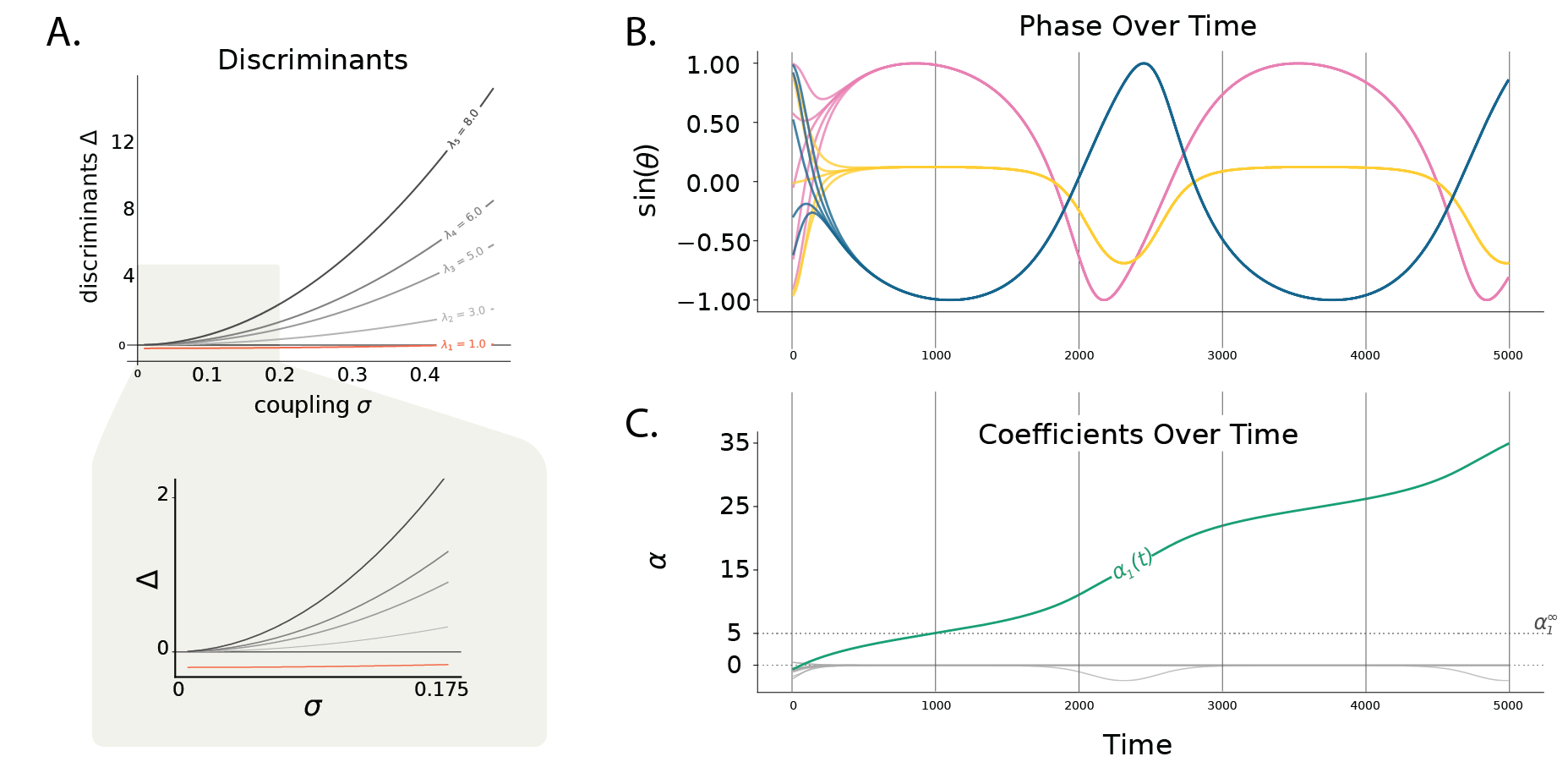}
    \caption{We dive deeper into clustered dynamics of the sample graph in Fig.\ref{fig:2 AEP Phase decomp}. \textbf{A. The discriminants} $\Delta_{r}$ for all eigenmodes $r$ as functions of the coupling $\sigma$, with fixed natural frequencies, constant within clusters ($\vec{\omega}$ is a multiple of the eigenvector $v_1$). \textbf{B. The phase plots} show the system experiencing multi-frequency clustered behavior with a choice of $\sigma=0.2$, inside the negative range of $\Delta_1$. Finally,
    \textbf{C. depicts the coefficients of the phase decomposition}, indicating a single dynamical mode, $\alpha_1(t)$. Note the periodic shifts in $\alpha_1(t)$ along with the activation and decay of a second mode ($\alpha_2(t)$). The periodic jumps in $\alpha_2(t)$ are reflected in the periodic changes in the clustered dynamics in \textbf{B}, but we note that while this is a periodically recurrent change, the clustered oscillators do experience a different number of oscillations from one another. In this case, the yellow cluster never completes a full oscillation, and the other two clusters are in counter rotation to one another.}
    \label{fig:single_mode_dynamic}
\end{figure}
This reasoning does not extend easily to $\delta-$QEPs. Certainly the machinery works for any eigenvectors of the graph Laplacian, but the information leak of the structural eigenvectors for a $\delta-$QEP complicates the description of clustered behavior. Given a system with constant natural frequencies within clusters, the structural eigenvectors of an associated $\delta-$QEP generically won't be able to fully describe the clustered dynamics. In the frequency locked synchronization regime, this resulted in small activation from other modes. When considering a single unstable mode, that information from non structural modes can allow them to also become unstable. This behavior distances the dynamics from the above analysis, and makes the results less interpretable. However, the clustered behavior is still present, and a more catered choice of natural frequencies should allow for the same behavior in isolated coefficients.

\subsection{Spectral Cluster Synchronization With Phase Lag}\label{sec:phase lag}

We now present an extension to the coefficient form of Kuramoto with the addition of antisymmetric phase lag between oscillators. The following Kuramoto Sakaguchi equations describe a well known generalization of the Kuramoto model to include phase differences $\beta_{ij}$ along edges
\begin{equation}\label{sakaguchi}
    \dot{\theta_i} = \omega_i - \sigma \sum_{j\in V} A_{ij} \sin(\theta_i - \theta_j+ \beta_{ij}).
\end{equation}
We consider the case in which $\beta_{ij}=-\beta_{ji}$. Recall that in constructing the coefficient equation we used the signed incidence matrix $B$ with some arbitrary orientation on the edges. For edge $a$ with orientation $(i,j)$ in $B$ we denote the phase difference along $\beta_{ij}$ as $\beta_a$ in the following extension of the coefficient equation:

\begin{equation}\label{coeff_phase_lag}
    \dot{\alpha}_r = \omega^{(r)} - \sigma \sum_{a\in E(G)} W_{aa}   e_a^{(r)} \sin\left( \sum_{s>0} e_a^{(s)}\alpha_s(t) + \beta_{a}\right).
\end{equation}

The linearized equilibrium analysis reveals that close to equilibrium, and with small $\beta_a$, we get the following limit for the coefficients:

\[
\alpha_r^{\infty} =\frac{\omega^{(r)}-\sigma \sum_{a\in E(G)}W_{aa}e_a^{(r)}\beta_{a}}{\sigma \lambda_r}.
\]

The interested reader can find a full derivation of these equations in appendix \ref{apdx:Coeff_Lapl}. Let's consider a couple examples to put this equation in context.

\textbf{Example 1:} Suppose the Kuramoto Sakaguchi model was placed on a network with an AEP, where intra cluster edges exhibit phase lag $\beta_a>0$, and inter cluster edges have negligible phase lag with $\beta_a\approx 0$. As before, natural frequencies are constant within clusters. In this case it follows that the equilibrium for all coefficients is:
\begin{align*}
    \alpha_r^{\infty} &=\frac{\omega^{(r)}-\sigma\sum^{\text{intra}}_{a}W_{aa}e_a^{(r)}\beta_{a}}{\lambda_r \sigma}.\\
\end{align*}
Then structural coefficients for our AEP approach the limit $
\frac{\omega^{(r)}}{\lambda_r \sigma},
$
whereas for the non structural coefficients approach
$
-\frac{\sum^{\text{intra}}_{a}W_{aa}e_a^{(r)}\beta_{a}}{\lambda_r \sigma}.
$
So in this example we see the structural coefficients approach the same values we would expect in a system without phase lag, exhibiting an isolation of the information for clustering behavior. Then the non structural coefficients, which dictate our distance from full cluster synchronization, are excited and reach their own equilibrium through an expression of the intra cluster phase lag and edge weights.  

Thus we see exactly what our intuition would suggest: a separation into the modes describing cluster synchronization and the modes adding noise, which spring up entirely due to the phase offset parameters and weighted edges. Note that this noise is also entirely independent of the coupling strength.\\

\textbf{Example 2: } Another telling example to consider is a network with uniform edge weight $w$. In this case, the equilibrium state becomes
\begin{equation*}
    \alpha_r^{\infty} =\frac{\omega^{(r)}-w\beta^{(r)}}{\lambda_r \sigma}
\end{equation*}
where we define $\beta^{(r)}=e^{(r)}\cdot \vec{\beta}$. This is the alignment of the $r$th down edge Laplacian eigenvector with the phase offset vector. Assuming this network has an AEP and that the natural frequencies are constant within clusters, recall that the down edge Laplacian eigenvectors can each be interpreted as providing an edge value assignment. Now suppose that $\vec{\beta}$ is a linear combination of the vectors $e^{(r)}$ corresponding to the structural eigenvectors. This requires equality in the antisymmetric phase-lag between pairs of clusters, with respect to the direction, and zero phase-lag within clusters. In this case, for structural coefficients, the equilibrium state looks as above, but by orthogonality, the nonstructural coefficients have $\beta^{(r)}=0$. Thus, for nonstructural eigenmodes, $r$, we get $\alpha^{\infty}_r = 0$, which means we can recover cluster synchronization in case of phase lag. Intuitively, our example requires the phase lag to respect the vertex partition. 
Moreover, when we allow edge weights to vary but keep a constant phase lag across all edges, the equilibrium state for the coefficients becomes
\begin{equation*}
    \alpha_r^{\infty} =\frac{\omega^{(r)}-w^{(r)}\beta}{\lambda_r \sigma}.
\end{equation*}
Here with $\vec{w}$ as the vector of edge weights we define $w^{(r)}=e^{(r)}\cdot \vec{w}$, the alignment of the $r$th down edge Laplacian eigenvector with the edge weight vector. This reveals an interesting duality between constant phase lag and constant weight, where they have comparable impacts on synchronization.

\section{Discussion}\label{discuss}
We draw from spectral graph theory and recent developments in network synchronization research to propose a generalized spectral framework that extends traditional synchronization analysis. We show how this new framework can be used to rewrite complex dynamical systems to better study the formation and evolution of clustered behaviors over a network. However, there are a couple considerations to keep in mind, indicating the use cases for which our spectral approach is best suited.

The first consideration is the proximity to synchronization, as transient behavior can disrupt the change of basis. The phases of our oscillators are in $\R$ mod $2\pi$, but the basis of eigenvectors represents vectors valued in $\R$. Thus, it is possible for transient behavior on the way to synchronization to result in phase offsets in multiples of $2\pi$ between synchronized oscillators (a non-trivial winding cell). These are equivalent phases mod $2\pi$, but not in the eigenvector basis, which results in the activation of spurious modes being necessary to describe the differences. This non-equivalence in the eigenvector basis means that we lose interpretability when the initial dynamics allow for some oscillators to get a full rotation ahead or behind. If this occurs at some point before synchronization, one could re-zero the phase vector closer to the synchronization to regain the interpretability of the decomposition thereafter. However, this property suggests that instead of chaotic system wide behavior, the eigenvector basis is most suited to studying synchronization patterns, or fully clustered dynamics, as we have in this paper.

Our work has a clear relationship to spectral clustering techniques and the study of cluster detectability in networks. It is easy to show, for example, that a stochastic block model (SBM)  does approach an AEP as the number of vertices in the graph approaches infinity.
The convergence to an AEP follows from showing the null expectation of the values in $E$, and the law of large numbers (See the appendix \ref{app:SBMlimit} for more of the proof). 

\begin{prop}
    In the positive limit of $n$, there exists a natural vertex partition for sampled graphs, given by the SBM block structure, for which the deviation from the AEP condition (or more precisely, EP) vanishes in the large-size limit.
\end{prop}

This relates our notion of a $\delta$-QEP to the previously developed model of stochastic AEPs, which describes simple graphs which are almost equitable in expectation.\cite{schaub_hierarchical_2023}

Recently, structural eigenvectors of the Bethe Hessian have been used for spectral clustering algorithms in sparse networks.\cite{duan_spectral_2020} We note that this matrix is a deformed Laplacian, sharing the same eigenvectors, but with rescaled eigenvalues. Thus one can study the same structural eigenvectors\cite{krzakala_spectral_2013} in relation to $\delta-$QEP community structure using 1 dimensional clustering\cite{patania_exact_2023} along individual eigenvectors or $k-$means on a block of eigenvectors to separate community structure.

We note that graphs with high symmetry can allow for duplicate AEPs, which manifest in the eigenbasis with repeated eigenvalues. These repeated eigenvalues allow for variability in the choices of eigenvector representatives. In this case eigenvector representatives of one associated AEP may not have uniform values within clusters. This slightly changes how the results are interpreted, but no matter the choice of eigenvector representative, the sub-basis of structural eigenvectors is still formed, allowing for the dimensionally reduced description of clustered dynamics. 
This kind of situation may lead to difficulty with detectability using 1-dimensional clustering methods because the choices of eigenvector representatives affect which group structure is displayed. However, in practice this effect does not have a substantial impact as most networks don't exhibit such high regularity, and this is even rarer for weighted networks.

While the focus of this work is theoretical, the spectral framework developed here has direct implications for the analysis of real-world systems where synchronization plays a functional role. In neuroscience, for example, both structural and functional brain networks exhibit modular and near-symmetric organization, which can support cluster synchronization across regions. Studies have shown that Laplacian eigenmodes align with empirically observed patterns of resting-state brain activity,\cite{atasoy_human_2016, deco_emerging_2011, cabral2011role} suggesting that AEP- or $\delta$-QEP-based reductions could provide mechanistic insight into these patterns.

An exciting direction for future work involves extending this spectral approach from vertex dynamics to edge dynamics. Recent work by Faskowitz et al. has highlighted the role of edge-centric representations in brain connectivity,\cite{faskowitz2020edge} and Maria Pope and colleagues have explored dynamic synchronization in networks where edge properties (e.g., weights or phase lags) change over time in response to activity.\cite{pope2023co} Adapting our framework to such edge-based models may provide new tools for capturing coordination at finer spatiotemporal resolutions and for modeling dynamic reconfiguration in systems where edge dynamics play an active role in shaping global synchrony.

Similarly, in power grid networks, synchronization of coupled oscillatory units (e.g., generators and loads) is essential for operational stability. The grid’s underlying modular structure often gives rise to approximate symmetries that could be captured by $\delta-$QEPs.\cite{Chow_Power_Grid_Coherency} Previous work has used spectral and symmetry-based approaches to study synchronization and stability in such systems,\cite{dorfler2012synchronization, motter2013spontaneous, nishikawa2015comparative} and our framework may provide an additional low-dimensional tool for modeling and control in this context. A natural next step is to integrate this framework into empirical studies of these systems, further bridging the gap between network structure and dynamic behavior.

In general, many real-world systems, such as biological or social networks, include inhibitory or antagonistic interactions, naturally leading to signed networks. The current framework assumes all edge weights are strictly positive; this assumption is primarily made to preserve the positive semi-definiteness of the graph Laplacian and to ensure convergence properties of the Kuramoto dynamics. Extending the spectral framework to accommodate such systems would require analyzing the Laplacian of signed graphs. This introduces new challenges, such as the possibility of negative eigenvalues and less predictable anti-synchronization behavior. We see this as an important avenue for future research, particularly in modeling competitive interactions and studying how inhibitory coupling affects the formation and stability of clustered dynamics.

\section{Conclusion}

In this work, we draw from multiple fields to build and present a novel approach to the study of clustered dynamics on graphs.
While much of the previous work studying cluster synchronization of the Kuramoto model has focused on structural properties in the graph itself, we show how these classical approaches relate to the field of signal processing on graphs, spectral graph theory, and community detection. 

Although our dynamical model is defined using the adjacency matrix, we emphasize that the Laplacian eigenvectors provide a particularly suitable basis for analyzing synchronization dynamics. This is because the Laplacian as an operator captures diffusion over the network, and its eigenbasis naturally encodes both global coherence (through the zero eigenmode) and modular organization (through low-lying nontrivial eigenvectors). The Kuramoto model is a diffusive system and its linearization can be expressed directly in terms of the Laplacian. More generally, the Laplacian spectrum reveals how perturbations propagate and decay across the network, offering a principled way to identify and interpret emergent synchronizing clusters. Thus, using the Laplacian as the analytic backbone of our framework ensures that we capture both the structural and dynamical factors shaping clustered synchronization.

We expand on how the eigenbasis of the graph Laplacian encodes information fundamental to the synchronization of network models, and can be used to reformulate the Kuramoto model's equations. This theoretical advancement reveals how the structure encoded in these eigenvectors provides direct insight into cluster synchronization and clustered dynamics, particularly when examining network vertex partitions into synchronizable clusters.

Our framework recontextualizes AEP cluster synchronization as the dynamics of structural eigenvectors of the graph laplacian. We introduce $\delta$ level quasi-equitable partitions ($\delta-$QEPs) as an approximate generalization of AEPs, allowing for deviations from the AEP spectrum. We prove that the spectral properties we establish for AEPs extend to $\delta-$QEPs when variations are sufficiently small, and we characterize the stability of these results as $\delta-$QEPs diverge from AEPs.

Furthermore, we demonstrate the versatility of our spectral approach by extending it to variants of the Kuramoto model, including networks with weighted edges and antisymmetric phase lag. By unifying multiple theoretical frameworks for studying synchronization, our work not only bridges existing approaches but also opens new avenues for investigating complex network dynamics. This unified spectral perspective provides a promising foundation for future research in networked dynamical systems and their synchronization properties.

\begin{acknowledgments}
We thank the Reviewers and the Editor for their many helpful suggestions and comments.

This material is based upon work supported by the National Science Foundation Graduate 
Research Fellowship Program under Grant No. 2235204. Any opinions, findings, 
and conclusions or recommendations expressed in this material are those of the authors 
and do not necessarily reflect the views of the National Science Foundation.
\end{acknowledgments}

\section*{Author Declarations}
The authors have no conflicts to disclose

\section*{Data Availability Statement}
The data and code that support the findings of this paper, and recreate the figures, are openly
available on github with the following link \url{https://github.com/TobiasTimofeyev/Spectral_Kuramoto}\cite{timofeyev2025spectral}.

\appendix

\section{Coefficient Form of Kuramoto}\label{apdx:Coeff_Lapl}
    We show the derivation of equations \ref{coeff_phase_lag} (and \ref{coeffeqn} by extension) onto weighted networks with antisymmetric phase lag, extending the equations as derived by Kalloniatis.\cite{kalloniatis_incoherence_2010} Our antisymmetric phase lag assumption means that the phase offset adheres to $\beta_{ij}=-\beta_{ji}$. As we saw in section \ref{sec:graphMat}, the graph Laplacian of a weighted graph can be written as $L=BWB^T$, where $W$ is a diagonal matrix of positive edge weights, and $B$ is the signed incidence matrix. As we saw with the unweighted Laplacian, there is an edge specific counterpart with which it shares a spectrum. Consider
    \[
    Lv=\lambda v \implies BWB^T v= \lambda v\implies B^TBW\left(B^Tv\right)=\lambda \left(B^T v\right)
    \]
    and
    \[
    B^TBW u=\lambda u \implies W B^TB u=\lambda u \implies  B W B^T (B u)=\lambda (B u)
    \]
    This establishes a one to one correspondence between eigenvector value pairs of $L=BWB^T$ and $B^TBW$. Denote the $r$th eigenvector of $B^TBW$ as $e^{(r)}$.
    Let $A$ be the weighted adjacency matrix. Let $\vec{\theta}=\sum_r \dot{\alpha_r}(t)v^{(r)}$ be the eigendecomposition of $\theta_i$ into the eigenvectors of $L=BWB^T$. Given the Kuramoto Sakaguchi equations with antisymmetric phase offsets, we can perform the following change of basis. 
    \begin{align*}
        \dot{\theta_i}(t)&=\omega_i-\frac{K}{N}\sum_{j=1}^N A_{ij} \sin(\theta_i-\theta_j+ \beta_{ij})\\
        \implies \sum_r \dot{\alpha_r}(t)v_i^{(r)}&=\omega_i-\frac{K}{N}\sum_{j=1}^N A_{ij} \sin\left(\sum_{s>0} \alpha_s(t) (v_i^{(s)}-v_j^{(s)}) +\beta_{ij}\right)\\
        \implies \sum_{r} \dot{\alpha_r}(t)v^{(r)}&=\omega-\frac{K}{N}\sum_{a=1}^M B_{a} W_{aa} \sin\left(\sum_{s>0} \alpha_s(t) B_a^T v^{(s)} + \beta_a\right)
    \end{align*}
    The last step turns the equation into a vector equation, and the sum from one over neighbors to one over incident edges. We denote the $a$th column of $B$ with $B_a$. This column corresponds to an edge, and contains a single $1$ and $-1$ to denote the choice of edge orientation. Note that two of these vectors are present in the vector formulation. Since sine is an odd function, the negative value in the $B_a$ outside flips the orientation of the difference $v_i^{(s)}-v_j^{(s)}=B_a^Tv^{(s)}$ to the one appropriate for that vertex. The antisymmetry of the phase offset comes into effect here. For an edge $\{i,j\}$ with the orientation for the signed incidence as written, and identified as $a$, we let $\beta_a=\beta_{ij}$. Then $\beta_{ji}=-\beta_a$, which allows this offset to agree with the sign flip inside the sine function. We may now isolate the coefficient using the orthogonality of the basis.
    \begin{align*}
        \dot{\alpha_r}(t)&=\omega\cdot v^{(r)}-\frac{K}{N}\sum_{a=1}^M v^{(r)}\cdot B_{a} W_{aa} \sin\left(\sum_{s>0} \alpha_s(t) B_a^T v^{(s)} + \beta_a\right)\\
        &\text{by orthogonality}\\
        \implies \dot{\alpha_r}(t)&=\omega^{(r)}-\frac{K}{N}\sum_{a=1}^M e^{(r)}_a W_{aa} \sin\left(\sum_{s>0} \alpha_s(t) e_a^{(s)} + \beta_a\right)\\
        \alpha_0 &= \sqrt{N} \overline{\omega} t
    \end{align*}
    
    Here we define $\overline{\omega}$ to be the average of $\omega$, and $\omega^{(r)}=\omega\cdot v^{(r)}$. This concludes the derivation of the coefficient equation with edge weights and antisymmetric phase offsets. Note that the choice of $\beta_a$ for each edge $a$ corresponds to the value at the orientation of the signed incidence matrix, and the value of the opposing direction simply has an opposite sign. In this way the chosen orientations are arbitrary and have no impact on the dynamics.
    
    We now proceed to the corresponding steady state analysis. To begin with, we show a property of the vectors $e^{(r)}$. 
    
    \begin{prop}
        Let $(e^{(r)}, \lambda_r)$ and $(e^{(s)},\lambda_s)$ be eigenvector-value pairs of the down edge Laplacian $L^\text{DN}=B^TBW$. It follows 
        \[   (e^{(r)})^T W e^{(s)}=
        \begin{cases} 
          \lambda_s & \hspace{10pt} \text{if $r=s$}\\
          0     &\hspace{10pt}   \text{otherwise.}
       \end{cases}
        \]
    \end{prop}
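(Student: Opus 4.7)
My plan is to leverage the explicit correspondence between the two Laplacians established earlier in Section 2: if $v^{(r)}$ is a unit eigenvector of the vertex Laplacian $L = BWB^T$ with eigenvalue $\lambda_r$, then the associated eigenvector of $L^{\text{DN}} = B^T B W$ is precisely $e^{(r)} = B^T v^{(r)}$. I would fix the standing convention that the $\{v^{(r)}\}$ are chosen as an orthonormal eigenbasis of $L$, which is possible because $L$ is real symmetric, and then deduce the claimed identity by direct substitution.

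The key computation is a one-liner. Inserting $e^{(r)} = B^T v^{(r)}$ and $e^{(s)} = B^T v^{(s)}$ into the quadratic form gives
\[
(e^{(r)})^T W e^{(s)} = (v^{(r)})^T B W B^T v^{(s)} = (v^{(r)})^T L v^{(s)} = \lambda_s (v^{(r)})^T v^{(s)}.
\]
By the orthonormality of the $\{v^{(r)}\}$, this last expression equals $\lambda_s$ when $r = s$ and $0$ when $r \neq s$, which is exactly the case split in the statement.

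The only real subtlety is what happens when $\lambda_r = \lambda_s$ for $r \neq s$, since $L^{\text{DN}}$ is not itself symmetric and orthogonality of its eigenvectors is not automatic. This is handled by the same convention: within any eigenspace of $L$, we are free to pick an orthonormal basis of representatives, and the indexing of $\{e^{(r)}\}$ inherits that choice through the bijection $v^{(r)} \mapsto B^T v^{(r)}$. With that convention fixed, the single computation above covers every case uniformly, so I do not expect a genuine obstacle.

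If one preferred a more intrinsic route that avoids naming the $v^{(r)}$, an alternative would be to observe that $WL^{\text{DN}} = WB^TBW$ is symmetric and to compute $(e^{(r)})^T W L^{\text{DN}} e^{(s)}$ in two ways, yielding $(\lambda_s - \lambda_r)(e^{(r)})^T W e^{(s)} = 0$ and hence the off-diagonal vanishing; the normalization in the diagonal case would still be recovered by the identification $e^{(r)} = B^T v^{(r)}$ with $\|v^{(r)}\| = 1$. I would present the first route because it makes both parts of the case split fall out of the same calculation and makes the role of the normalization transparent.
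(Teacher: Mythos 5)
Your proposal is correct and follows essentially the same route as the paper's own proof: substitute $e^{(r)}=B^Tv^{(r)}$, reassemble $BWB^T=L$, and invoke orthogonality of the eigenvectors of the symmetric vertex Laplacian. Your explicit remark about choosing orthonormal representatives within degenerate eigenspaces is a worthwhile clarification that the paper's version leaves implicit.
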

    \begin{proof}
        Let $L=BWB^T$ be the corresponding graph Laplacian. It follows that $e^{(r)}=B^Tv^{(r)}$ and $e^{(s)}=B^Tv^{(s)}$ where $v^{(r)}$ and $v^{(s)}$ are eigenvectors of $L$ with eigenvalues $\lambda_r$ and $\lambda_s$ respectively. It follows
        \[
        (e^{(r)})^T W e^{(s)}=(B^Tv^{(r)})^T W B^Tv^{(s)}=(v^{(r)})^T BWB^Tv^{(s)}=(v^{(r)})^T v^{(s)}\lambda_s.
        \]
    The proof concludes by the orthogonality of the eigenvectors of the graph laplacian $L$.
    \end{proof}
    
    The above is not quite orthogonality of the eigenvectors of $L^\text{DN}$, since in this weighted formulation it loses symmetry, but it serves a similar purpose. Consider the system close to synchronization for $r\not=0$ and relatively small phase offsets. This supports a linearized approximation of the sine term.

    \begin{align*}
        \dot{\alpha_r}(t)&=\omega^{(r)}-\frac{K}{N}\sum_{a=1}^M e^{(r)}_a W_{aa} \sin\left(\beta_a + \sum_{s>0} \alpha_s(t) e_a^{(s)}\right)\\
        &\approx \omega^{(r)}-\frac{K}{N}\sum_{a=1}^M e^{(r)}_a W_{aa} \left(\beta_a + \sum_{s>0} \alpha_s(t) e_a^{(s)}\right)\\
        &= \omega^{(r)}-\frac{K}{N}\left(\sum_{a=1}^M  e^{(r)}_a W_{aa} \beta_a - \sum_{s>0}\alpha_s(t)\sum_{a=1}^M  e^{(r)}_a W_{aa}   e_a^{(s)}\right)\\
        &=\omega^{(r)}- \frac{K}{N} \sum_{a=1}^M  e^{(r)}_a W_{aa} \beta_a -\frac{K}{N}\alpha_r(t)  \lambda_{r}\\
        &=\omega^{(r)}- \sigma \sum_{a=1}^M  e^{(r)}_a W_{aa} \beta_a -\sigma\alpha_r(t)  \lambda_{r}\\
    \end{align*}
    
    The solution of this linear ODE is standard and has the following form.
    
    \[
    \alpha_r(t) = \frac{\omega^{(r)} - \sigma \sum_{a=1}^M  e^{(r)}_a W_{aa} \beta_a}{\sigma \lambda_r} \left(1-e^{-\sigma \lambda_r}\right) + \alpha_r(0) e^{-\sigma \lambda_r}
    \]
    
    It follows that the asymptotic state of this linearized solution is
    
    \[
    \alpha_r^{\infty} =\frac{\omega^{(r)}-\sigma \sum_{a}W_{aa}e_a^{(r)}\beta_{a}}{\sigma \lambda_r}.
    \]

    Something to note is that the sine term in the original Kuramoto model is called the phase interaction function (PIF),\cite{breakspear_generative_2010} and all of our analysis would work with other PIFs as long as they are odd functions, with only mild adjustments if the first order taylor approximation differs.

\section{Transient Clustering}\label{apdx:transient}

We begin with the proof of proposition \ref{prop:transient}.

\begin{proof}
    Let $G$ be a connected graph with an AEP, $\pi$. Suppose $\pi$ is sufficiently assortative in that the $p\geq 1$ smallest eigenvalues of the graph Laplacian, $L$, are associated to corresponding structural eigenvectors. Assume there are no repeated eigenvalues. Consider a Kuramoto model on $G$ with uniform natural frequencies, that is $\vec{\omega} = c\mathbb{1}_n$ for some constant $c$.

    Recall the explicit solution for the linearized equations, \ref{eqn:linearcoeff}. Since $G$ is connected it follows that the eigenvalue $\lambda_0=0$ of $L$ has multiplicity one, and $\vec{v}_0=\frac{1}{\sqrt{N}}\mathbb{1}_n$. By orthogonality it follows that $\omega^{(r)}=0$ for all $0< r\leq n$. Thus equation \ref{eqn:linearcoeff} reduces to
    \[
        \alpha_r(t) = \alpha_r(0) e^{-\sigma \lambda_r t}
    \]
    for all $0< r\leq n$. Since $\sigma$ is a constant, the positive eigenvalue distinguishes the relative decay of differing eigenmodes fromt their starting values. By assumption, for modes $0<r\leq p$, the eigenvectors are structural, and without repeated eigenvalues, are in the column space of $P^\pi$. Thus when these modes dominate the dynamics of $\vec{\theta}$, cluster synchronization is achieved. Notice that for $0<i\leq p$ and $p<j\leq n$,
    \[
    \frac{|\alpha_i(t)|}{|\alpha_j(t)|} = \frac{|\alpha_i(0)|}{|\alpha_j(0)|} e^{-\sigma (\lambda_i-\lambda_j)t}.
    \]
    By assumption $\lambda_i<\lambda_j$, which implies that this ratio is growing with time for all such pairs of modes where $|\alpha_i(0)|,|\alpha_j(0)|>0$. Thus, naturally the eigenmodes with indices $0< i\leq p$ and non-zero initial condition will dominate with time. 
    
    Note that the choice of initial condition has a significant impact on the visibility of the clustered behavior, as the structural eigenmodes may only dominate once all coefficient magnitudes have become small.
\end{proof}

We note that with additional restrictions for repeated eigenvalues and eigenvectors, the above proof can straightforwardly be extended to the case of repeated eigenvalues. The clusters of the above emerge from a difference in the rates of decay. Since repeated eigenvalues create ambiguity in the choices of eigenvectors to span their induced subspace, a simple requirement that the choices of structural eigenvectors of the $p$ strictly smallest eigenvalues exist in the column space of $P^\pi$ is required.

Next we prove a lemma towards the proof of \ref{prop:transienthier}.

\begin{lemma}\label{lem:refinement}
    Consider two AEPs, $\pi_1$ and $\pi_2$, such that the former is a refinement of the latter. Then the column space of $P^{\pi_2}$ is contained in the column space of $P^{\pi_1}$.
\end{lemma}

\begin{proof}
    Let $w_i$ be the $i$th column of $P^{\pi_2}$. This has entries of $1$ for all vertex indices within cell $i$ of $\pi_2$. Since $\pi_1$ is a partition of the vertices, all the vertices of cell $i$ in $\pi_2$ are contained in some cell of $\pi_1$. Since $\pi_1$ is a refinement, $i$ is either a block in $\pi_2$ in its entirety, or subdivided into smaller blocks in $\pi_1$. In either case there exists some set of columns $u_1,\dotsc, u_p$ such that $w_i = u_1 + \dots + u_p$. Thus all the columns of $P^{\pi_2}$ are in the columns space of $P^{\pi_1}$ and we are done.
\end{proof}

Now, the proof of proposition \ref{prop:transienthier} follows from the previous lemma and proposition \ref{prop:transient}.

\begin{proof}
    We assume that some low structural modes for each AEP in the hierarchy have nonzero initial condition. Under proposition \ref{prop:transient}, in time the eigenvalues with indices below $p_k$ will dominate the rest to whatever degree desired. By lemma \ref{lem:refinement}, the combination of the $p_i$ lowest eigenvectors is in the column space of $P^{\pi_i}$, corresponding to the finest partition. Thus behavior dominated by the $p_k$ lowest eigenmodes corresponds to clustered behavior on $\pi_k$. However, proposition \ref{prop:transient} may be applied again to the $p_{k-1}$ smallest modes, and the $p_{k-2}$ smallest modes, and so on. Thus, up to the choice of initial conditions, the system can experience cluster synchronization corresponding to each level of the AEP hierarchy, on the way to full system synchronization.
\end{proof}

\section{Single mode dynamics for weighted graphs}\label{sec:singlemodeweight}

Consider the coefficient equation as written in equation \ref{coeffeqn}. We assume that the system has reached the equilibrium \ref{eqn:limitcoeff} in all modes except $r=r_1$ which remains time dependent. Given these assumptions, we approximate the dynamics of the coefficient equations on a weighted graph, to second order. Note that we drop cubic terms in the $r_1$ coefficient and quadratic in the others.
\begin{align*}
    \dot{\alpha}_{r_1}(t) &= \omega^{(r_1)} - \sigma \sum_{a=1}^M e_a^{(r_1)}W_{aa} \sin\left(\alpha_{r_1}(t)e_a^{(r_1)} + \sum_{s\neq 0,r_1}\alpha^\infty_s e_a^{(s)} \right)\\
    &\approx \omega^{(r_1)} - \sigma \sum_{a=1}^M e_a^{(r_1)}W_{aa} \left[\left(\alpha_{r_1}(t)e_a^{(r_1)} + \sum_{s\neq 0,r_1}\alpha^\infty_s e_a^{(s)} \right) - \frac{1}{6}\left(\alpha_{r_1}(t)e_a^{(r_1)} + \sum_{s\neq 0,r_1}\alpha^\infty_s e_a^{(s)} \right)^3\right]\\
    &\approx \omega^{(r_1)} - \sigma \sum_{a=1}^M e_a^{(r_1)}W_{aa} \left[\left(\alpha_{r_1}(t)e_a^{(r_1)} + \sum_{s\neq 0,r_1}\alpha^\infty_s e_a^{(s)} \right) - \frac{1}{6}\left(3(\alpha_{r_1}(t)e_a^{(r_1)})^2  \sum_{s\neq 0,r_1}\alpha^\infty_s e_a^{(s)} \right)\right]\\
    &= \omega^{(r_1)} - \sigma\lambda_{r_1}\alpha_{r_1}(t) - \sigma x_{r_1} \alpha_{r_1}^2(t)
\end{align*}

Where we define

\begin{align*}
    x_{r_1} &= \sum_{s\neq 0,r_1} \frac{\omega^{(s)}}{2 \sigma \lambda_s}\sum_{a=1}^M W_{aa} (e_a^{(r_1)})^3e_a^{(s)}.
\end{align*}

After this derivation, introducing weight into the system, the rest of the analysis follows the work of Kalloniatis\cite{kalloniatis_incoherence_2010}.

\section{SBMs approach AEPs in large n limit\label{app:SBMlimit}}

\begin{prop}
    With a choice of vertex blocks for an SBM model, there is a natural partition for any sampled graph. Thus a sampling induces corresponding $E=LP-PL^\pi$. In the limit of graph size, the values of $E$ approach zero.
\end{prop}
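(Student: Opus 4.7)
I would exploit the identity $E = NP - N^{\pi}P$ that the paper derives just above the proposition, which isolates the equitable error as a linear functional of the ``noise'' that separates a sampled Laplacian from an almost equitable reference. The SBM supplies exactly this structure: the reference is the expected graph and the noise is a matrix of independent mean-zero bounded entries, at which point a law-of-large-numbers argument collapses everything. Concretely, I fix an SBM with blocks $V_1,\dots,V_k$ of sizes $n_1,\dots,n_k$ and inter-block edge probabilities $p_{ij}$, take $\pi$ to be the block partition with indicator matrix $P$, and observe that the expected adjacency $\bar A = \mathbb{E}[A]$ has constant entries $p_{ij}$ within block $(i,j)$. Hence the expected Laplacian $\bar L = \bar D - \bar A$ assigns every $v \in V_i$ an edge-weight sum of exactly $n_j p_{ij}$ into each other block $V_j$, so $\bar L$ is (weighted) equitable for $\pi$ --- giving the ``more precisely, EPs'' parenthetical.

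The next step is to apply the cancellation identity to the decomposition $L = \bar L + N$ with $N = L - \bar L$ mean zero. Since $\bar L$ already supports $\pi$ as an AEP, the identity immediately yields $E = NP - N^{\pi}P$, so the equitable error depends only on $N$. Fixing $v \in V_i$ and a block $V_j$, the entry $(NP)_{vj}$ is, up to the diagonal Laplacian correction, a sum of independent centered bounded random variables $\{A_{vu} - p_{ij} : u \in V_j\}$; Hoeffding's inequality gives $|(NP)_{vj}| = O_p(\sqrt{n_j})$, while the matching entry of $N^{\pi}P$ is a within-block average of the same kind of sum over $u \in V_i$ and concentrates faster by a factor $1/\sqrt{n_i}$. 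A union bound over the $nk$ entries of $E$ then controls $\|E\|_\infty$ with high probability.

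The hard part will be stating precisely what ``approach zero'' means. Taken literally, the entries of $E$ diverge like $\sqrt{n}$ rather than vanish; the claim is only true after normalization by the scale of $L$ itself, whose diagonal entries grow like $\Theta(n)$ in the dense regime. The clean statement I would aim for is either $\|E\|_\infty / n \to 0$ almost surely, or the equivalent per-vertex statement that $\deg_j(v)/n_j \to p_{ij}$ by the law of large numbers, which recovers the AEP property in the limit on the natural scale of the Laplacian. A secondary subtlety is the sparse regime where $p_{ij}$ itself depends on $n$: the Hoeffding step survives as long as expected block-degrees diverge, but the normalization must be chosen to track the sparsity, and if one wants spectral rather than entrywise convergence, bounding $\|N\|$ via matrix concentration (Bandeira--van Handel or Feige--Ofek) would replace the entrywise union bound.
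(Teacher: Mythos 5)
Your proposal rests on the same two pillars as the paper's own proof---the expected SBM graph is equitable with respect to the block partition, and independence of the edge samples lets a concentration argument control the deviations---but you execute it more carefully and, in doing so, expose a real weakness in the paper's version. The paper argues directly with the out-degree sums $d_{out}(i,C_q)$: it computes $\mathbb{E}(d_{out}(i,C_q)) = p_{pq}\lvert C_q\rvert$, concludes $\mathbb{E}(E)=0$, and then asserts via the law of large numbers that the entries of $E$ ``approach that value as the block sizes grow.'' You instead pass through the decomposition $L=\bar L+N$ with $\bar L=\mathbb{E}[L]$ and the cancellation identity $E=NP-N^{\pi}P$ already derived in the main text, then apply Hoeffding entrywise; this buys you explicit rates and makes the role of the equitable reference graph transparent.

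The substantive point you raise---and the paper does not---is that the entries of $E$ are centered sums of $\Theta(n)$ independent Bernoulli variables, so they fluctuate at scale $\sqrt{n}$ and do \emph{not} literally tend to zero. The law of large numbers gives $d_{out}(i,C_q)/\lvert C_q\rvert \to p_{pq}$, i.e.\ convergence of the \emph{normalized} degree, not $d_{out}(i,C_q)-p_{pq}\lvert C_q\rvert \to 0$; the paper's final sentence conflates ``zero in expectation'' with ``converging to zero.'' Your proposed fix---stating the result as $\|E\|_{\infty}/n \to 0$, or as convergence of degree fractions, with the normalization tracked explicitly in the sparse regime---is the correct repair, and the proposition as printed should be restated in one of these normalized forms for the claim to be true as written.
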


\begin{proof}
    We consider how $E$ scales with the size of the graph in the context of the SBM model. This allows us to scale a graph, while keeping the Quotient graph associated to the $\delta-$QEP we consider on that graph, constant in structure (though the edge weights are subject to change). Recall that the SBM model involves a partition of the vertices such that the connection probabilities between vertices in the graph model are determined solely by the partition blocks they are assigned to. Thus choose our SBM probability matrix to be 
    \begin{align*}
        Pr=
        \begin{bmatrix}
            \rho_{11} & \cdots & \rho_{1r}\\
            \vdots & & \vdots \\
            \rho_{r1} & \cdots & \rho_{rr}
        \end{bmatrix}
    \end{align*}
    This means that with some choice of vertex communities, $C_1,\dotsc, C_r$, we can sample an adjacency matrix.  
    \begin{align*}
        A &= 
        \begin{bmatrix}
            a_{11} & \cdots & a_{1n}\\
            \vdots & & \vdots \\
            a_{n1} & \cdots & a_{nn}
        \end{bmatrix}
        = 
        \begin{bmatrix}
            A^{11} & \cdots & A^{1r}\\
            \vdots & & \vdots \\
            A^{r1} & \cdots & A^{rr}
        \end{bmatrix}\\
    \end{align*}
    Where $A^{pq}$ denotes the the adjacency block describing connections between communities $p$ and $q$. Every entry in $A^{pq}$ with $p\neq q$ has probability $\rho_{pq}$ of being a $1$ and is otherwise zero. The diagonal blocks $A^{pp}$ have the additional requirements of being symmetric and maintaining zeros on their diagonal. We vary these communities while keeping $Pr$ constant. 
    
    This construction allows us to create an indicator matrix $P$ for the partition that the communities provide. This allows us to define $L$ and $L^\pi$. Now consider $LP$. This sums the columns of $L$ whose vertices belong to the same community. This means that each block $L^{pq}$ of $L$ has its rows summed together. Each of these entries correspond to the out degree of the vertex $i$ in community $C_p$ to the community $C_q$, which we denote as $d_{out}(i,C_q)$. The key observation to make here is that for each vertex $i$, the connection with every vertex in $C_q$ is sampled \textbf{independently} with probability $\rho_{pq}$. It follows that $\mathbb{E}(d_{out}(i,C_q))=\rho_{pq} |C_q|$. By the law of large numbers, the values will approach this expectation with larger sample sizes. This means that in the limit of scaling $C_q$, the value of $d_{out}(i,C_q)$ approaches $\rho_{pq} |C_q|$. This is true for all $i$ in $C_p$. 
    
    Let $d_{avg}(C_p,C_q)$ denote the average value of $d_{out}(i,C_q)$ over all $i\in C_p$. From our previous claim, $\mathbb{E}(d_{avg}(C_p,C_q))=\rho_{pq} |C_q|$ by linearity. Thus $\mathbb{E}(d_{avg}(C_p,C_q) - d_{out}(i,C_q))=0$, and consequently $\mathbb{E}(E) = \mathbb{E}(LP - PL^\pi) = 0$. Since the entries of $E$ are zero in expectation, By the law of large numbers, we can expect them to approach that value as the block sizes grow.

\end{proof}

We mention as a note that the above result is a consequence of the structure imposed by the SBM model. Namely that its sampled graphs are equitable partitions in expectation. This is as a result of their edge sampling within blocks being of uniform probability. We also note that the above relates to the work of Schaub et al. on stochastic AEPs.\cite{schaub_hierarchical_2023}

% Create the reference section using BibTeX:
% \section*{References}
\bibliography{refs}

\end{document}